\newtheorem{theorem}{Theorem}
\newtheorem{lemma}{Lemma}
\newtheorem{remark}{Remark}
\newtheorem{corollary}{\bf Corollary}
\newtheorem{algorithm}{Algorithm}
\newcommand{\ied}{{i.e.}, }
\newcommand{\eqdef}{ := }
\renewcommand{\footnoterule}{%
  \kern -3pt
  \hrule width 150pt height .2pt
  \kern 2pt
}
\begin{document}
%
\title{\huge Performance of Multi-Cell Massive MIMO Systems With \\ Interference Decoding}
%
%
%

\author{Meysam~Shahrbaf~Motlagh,\IEEEmembership{}
        Subhajit~Majhi,~\IEEEmembership{}Patrick~Mitran~\IEEEmembership{}
\thanks{The authors are with the Department of Electrical and Computer Engineering, University of Waterloo, Waterloo, ON N2L 3G1, Canada.}}

%
%

{}
%



\maketitle

\begin{abstract}
We consider a multi-cell massive MIMO system where a time-division duplex protocol is used to estimate the channel state information via uplink pilots. When maximum ratio combining (MRC) is used at the BSs, the re-use of pilots across cells causes the pilot contamination effect which yields interference components that do not vanish as the number of base-station (BS) antennas $M \rightarrow \infty$. When treating interference as noise (TIN), this phenomenon limits the performance of multi-cell massive MIMO systems. In this paper, we analyze more advanced schemes based on simultaneous unique decoding (SD) as well as simultaneous non-unique decoding (SND) of the interference that can provide unbounded rate as $M \rightarrow \infty$. We also establish a worst-case uncorrelated noise technique for multiple-access channels to derive achievable rate expressions for finite $M$. Furthermore, we study a much simpler subset of SND (called S-SND) which provides a lower bound to SND and achieves unbounded rate as $M \rightarrow \infty$, and also outperforms SD for finite $M$. For the special cases of two-cell and three-cell systems, using a maximum symmetric rate allocation policy we compare the performance of different interference decoding schemes with that of TIN. Finally, we numerically illustrate the improved performance of the proposed schemes.
\end{abstract}

\begin{IEEEkeywords}
\footnotesize Massive MIMO, pilot contamination, simultaneous unique/non-unique decoding (SD/SND), treating interference as noise (TIN).
\end{IEEEkeywords}

%
\IEEEpeerreviewmaketitle

\section{Introduction}
%
%
%
%
\IEEEPARstart{W}{ireless} communication standards are rapidly evolving to deal with challenges such as the ever increasing number of users as well as the demand for higher data rates and energy efficiency. These challenges give rise to the need to incorporate new protocols and techniques in 5G cellular networks. For instance, the METIS 5G project has as overall technical goals to increase the typical user data rate in a mobile network by 10x to 100x, handle 1000x more mobile data traffic per unit area, and support 10x to 100x more connected devices, all by 2020 \cite{GreenTouch}. Also in light of the demand for increased energy efficiency, the GreenTouch initiative has, for example, aimed to reduce the net energy consumption in end-to-end communication networks by up to $98 \%$ by 2020, compared to 2010 \cite{METIS}. In order to meet these goals, several promising solutions have been proposed for 5G, including massive multi-input multi-output antenna systems \cite{larsson2014massive}, cloud-RAN \cite{peng2015system}, ultra-densification \cite{bhushan2014network} and millimeter wave communications \cite{niu2015survey, majhi_IC_journal}.

In a massive MIMO communication system, each BS utilizes a very large number of antennas, which allows for the simultaneous serving of several (single or multi-antenna) users, where the number of BS antennas is normally assumed to be significantly larger than the number of users. The introduction of massive MIMO technology dates back to the seminal work of Marzetta in \cite{marzetta2010noncooperative}, where it was shown that, under the assumption of independent Rayleigh fading channels and when the number of BS antennas grows to infinity, the effects of small-scale fading, intra-cell interference and additive noise all vanish due to channel hardening effects. In theory, massive MIMO has several advantages including high energy efficiency \cite{ngo2013energy}, high spectral efficiency and increased capacity through the aggressive spatial multiplexing of many users \cite{bjornson2014optimizing, larsson2014massive}, all enabled by the use of simple linear precoding/decoding techniques \cite{hoydis2013massive}. With regards to energy efficiency in particular, the work of \cite{ngo2013energy} has analytically shown that when perfect CSI is available at the BSs, the uplink transmit power of each user can be scaled inversely proportionally with the number of BS antennas, without any performance loss.

Successful implementation of massive MIMO in practice, however, depends heavily on the availability of knowledge of channel state information (CSI) at the BSs. Thanks to the use of time-division duplex (TDD) protocols as suggested in \cite{marzetta2010noncooperative} as well as uplink and downlink channel reciprocity, the BSs are able to estimate downlink channels using uplink pilots and later employ these estimates for both precoding and decoding purposes. In reality, the length of the channel coherence time is finite, and therefore the number of available orthonormal pilot sequences is limited. Consequently, in a multi-cell system, the same set of orthonormal pilots is shared across multiple cells. This, in turn, degrades the channel estimation performance for a user in one cell in that it will be contaminated by the channels of users in other cells whose pilots are not orthonormal to the first user. More specifically, the pilot contamination (PC) effect becomes a source of interference that precludes the logarithmic growth of users' achievable rate with the number of BS antennas \cite{marzetta2010noncooperative}. Hence, it is believed that PC constitutes a fundamental bottleneck in multi-cell massive MIMO systems \cite{elijah2016comprehensive}. Some notable exceptions to this belief are the works of \cite{yin2016robust, adhikary2017uplink, bjornson2018massive} that will be discussed in the sequel.
\subsection{Contributions}
In this paper, we assume the same orthonormal pilots are used across multiple cells, and take a different view of the inter-cell interference caused by PC. More specifically, we show that when $M \rightarrow \infty$ the use of more sophisticated schemes such as decoding the PC interference, rather than simply treating it as noise, allows one to attain unbounded rates, even in the presence of the PC effect.

We summarize the major contributions of this paper as follows:

\begin{itemize}
\item Using the capacity region obtained by simultaneous unique decoding of the desired signal and PC interference (i.e., employing SD as opposed to TIN), it is shown that when maximum ratio combining (MRC) is employed at BS, the per-user rate tends to infinity as $M \rightarrow \infty$.
\item It is shown that when decoding interference due to the PC, reusing the same pilots across cells (as opposed to using different pilots) is preferable as it requires decoding significantly fewer interference terms.
\item The benefits of using simultaneous non-unique decoding (SND) is investigated, which strictly contains regions SD and TIN. Moreover, a simplified subset of SND (S-SND) is studied, which is shown to be strictly larger than SD and also provides a lower bound to SND.
\item A worst-case uncorrelated noise technique for multiple access channels (MAC) is established that for finite $M$ yields achievable rate expressions over regions SD/SND/S-SND.
\item  The problem of maximum symmetric rate allocation (i.e., maximizing the minimum achievable rate) for TIN/SD/SND/S-SND is investigated. Some structural results are also presented for the two extreme regimes of high and low SINR. In particular, it is found that when the number of BS antennas is truly large the interference decoding schemes SD/SND achieve the same performance and also strictly outperform TIN.
\item For the special case of a two-cell system and assuming a symmetric geometry, it is shown that for relatively small values of $M$, the PC interference is ``weak" in that SND and TIN achieve the same rate and both of these strictly outperform SD and S-SND. Hence, one may choose TIN which is simpler to implement. Nevertheless, for large values of $M$ (beyond a threshold), the PC interference becomes ``strong" so that the interference decoding schemes SD/SND provide the same performance and both of these strictly outperform TIN. Hence, one may choose SD which is simpler to implement. Analytical conditions in terms of mutual information expressions under which these results hold are also found.
\item For the special case of a three-cell system, it is numerically shown that the use of SND can provide a strictly better performance compared to \textit{all the other schemes}.
\end{itemize}
One should note that the theoretical contributions in the first five items listed above as well as the analytical conditions for the two-cell case are valid regardless of the numerical results presented in Section V. It is only in Section V that specific values for the system parameters are chosen to numerically compare the performance of different schemes and hence validate the analytical findings of Sections III and IV. 
\subsection{Related Work}\label{sec:1}
In order to tackle the PC problem, systematic solutions have been extensively studied in the literature; some attempt to alleviate the PC effects by reducing its impact on the system performance \cite{huh2012achieving, fernandes2013inter, appaiah2010pilot, jose2011pilot} whereas others, given that some assumptions and requirements are satisfied, suggest schemes that completely eliminate PC and provide unbounded achievable rates in the asymptotic regime \cite{yin2016robust, adhikary2017uplink, van2018large, bjornson2018massive}. 
Specifically, in the framework of PC mitigation schemes, the use of time-shifted pilots is proposed in \cite{fernandes2013inter, appaiah2010pilot}, where in order to make sure that non-orthonormal pilots do not overlap in time, the location of pilots in frames shift so that transmission in different cells is done at non-overlapping times. In addition to time-shifted pilots, power allocation algorithms are also proposed in \cite{fernandes2013inter} and have shown to provide significant gains. In \cite{jose2011pilot}, a multi-cell MMSE based precoding is investigated for downlink to minimize the sum of the mean square error of signals received at the users in the same cell and the mean square interference seen by users in other cells. Unfortunately, in all these techniques the PC effect is only partially suppressed and thus the achievable rates do not grow without bound as the number of BS antennas is increased. 

In the line of works that construct asymptotically noise and interference free systems with infinite capacity, \cite{yin2016robust} considers a semi-blind channel estimation technique to separate the subspace of the desired user channels from the subspace of interfering channels due to PC. However, in order for complete elimination of the PC effect, this method requires that the channel coherence time goes to infinity. Unfortunately, this assumption is not true in practice. Other interesting works include \cite{adhikary2017uplink, van2018large}, where a large-scale fading decoding (LSFD) technique is used that eliminates PC interference with the help of a network controller resulting in achievable rates that scale as $\mathcal{O} (\log M)$. Therein, BS cooperation is required for the exchange of large-scale fading coefficients between BSs and the network controller, which results in backhaul overhead. Different from \cite{adhikary2017uplink, van2018large}, the work of \cite{bjornson2018massive} uses a multi-cell MMSE precoding/combining technique and assumes that pilot-sharing users must have asymptotically linearly independent covariance matrices. This assumption, however, may not always be true and also requires the knowledge of channel covariance matrices at the BSs.

In this paper, we do not view the interference caused by PC as a fundamental limitation in a multi-cell system due to treating it as noise. This is because TIN is known to be suboptimal in some scenarios \cite{bandemer2011interference, baccelli2011interference}, and hence the present work proposes more sophisticated schemes based on interference decoding. 

The rest of this paper is organized as follows. In Section II, we present the system model and describe the PC problem. In Section III, to combat the PC problem we propose two interference decoding schemes, i.e., SD and SND, as well as an achievable subset of SND (S-SND). Achievable rate expressions based on the worst-case uncorrelated noise technique are also derived in this section. The problem of maximum symmetric rate allocation is studied in section IV along with some structural results for the extreme regimes of high and low SINR as well as for the special cases of two-cell and three-cell systems. In Section V, we demonstrate numerical results, and finally, section VI concludes this paper.

\textit{Notation}: We use boldface upper and lower case symbols to represent matrices and vectors, respectively. An $M \times M$ identity matrix and an all-zero vector are denoted by $\pmb{I}_M$ and $\pmb{0}$, respectively. The superscripts $(.)^{T}$, $(.)^{\dag}$, $(.)^{\ast}$, and $(.)^{-1}$ denote the transpose, Hermitian transpose, conjugate and inverse operations. The notation $\textrm{diag} (\pmb{v})$ represents a diagonal matrix with elements $v[1], v[2], ...$ of vector $\pmb{v}$ along its main diagonal. The expressions $\mathbb{E} \left[ . \right]$ and $\textrm{var} \left[ \right]$ are used to denote mean and variance of a random variable, respectively, and $\mathcal{CN} (\pmb{m}, \; \pmb{R})$ denotes the circular symmetric complex Gaussian distribution with mean vector $\pmb{m}$ and covariance matrix $\pmb{R}$.
\section{Preliminaries}\label{sec:2}
\subsection{System Model}
We consider a multi-cell communication system with $L$ cells, where each cell has a BS equipped with $M$ antennas serving $K$ ($M \gg K$) single antenna users. Assuming a flat-fading model, the channel between the $M$ antennas of the BS in cell $j$ and the users in cell $l$ is described by
\begin{equation}\label{eq:1}
\pmb{G}_{jl}=\pmb{H}_{jl}\pmb{D}_{jl}^{1/2},
\end{equation}   
where $\pmb{H}_{jl}=[\pmb{h}_{j1l},\pmb{h}_{j2l}, ..., \pmb{h}_{jKl}] \in \mathbb{C}^{M\times K}$ is the channel matrix associated with the channel vectors $\pmb{h}_{jkl} \in \mathbb{C}^{M \times 1}$ of small-scale fading coefficients between antennas of the BS in cell $j$ and the $k^{th}$ user in cell $l$, and $\pmb{D}_{jl} = \textrm{diag}(\beta_{j1l}, \beta_{j2l}, ..., \beta_{jKl} )$ is the matrix of large-scale fading coefficients. One may rewrite \eqref{eq:1} as 
\begin{equation}
\pmb{g}_{jkl}= \sqrt{\beta_{jkl}} \pmb{h}_{jkl},
\end{equation}
which explicitly shows that the large-scale fading coefficient $\beta_{jkl}$, which models shadowing and path loss and is assumed to be known at the BS, is constant with respect to the index $m$ of the BS antenna (see Fig. \ref{fig:1}). The latter follows since the distance between the BS and a user is much larger than the spacing between the antennas of the BS. We also assume a block fading model where the large-scale fading coefficients $\beta_{jkl}$ are constant over many coherence time intervals $T$ and known a $priori$, whereas small-scale fading coefficients $h_{jkl}[m], m=1, ..., M$ are constant over one coherence interval, and drawn independently in each coherence interval with $\pmb{h}_{jkl} \sim \mathcal{CN} (\pmb{0}, \, \pmb{I}_M)$ (i.e., flat-fading model). 

Furthermore, we consider TDD operation such that reciprocity holds between uplink and downlink channels. We take the frequency re-use factor to be one, i.e., the whole frequency band is used in one cell, and re-used in all the adjacent cells. This assumption, in particular, entails a worst-case inter-cell interference.
\begin{figure}[t!]
\centering 
\footnotesize
\def\svgwidth{213pt} 
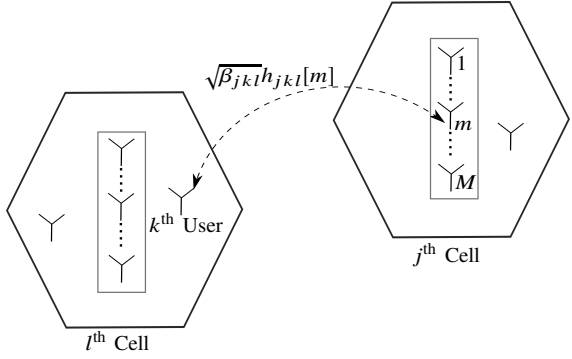 
\caption{\small System model showing the channel gain between the $m^{\textrm{th}}$ antenna of the BS in cell $j$ and the $k^{\textrm{th}}$ user in cell $l$.} 
\label{fig:1}
\end{figure}
\subsection{Uplink Data Transmission}
We point out that the model used for uplink data transmission in this paper is similar to that of \cite{hoydis2013massive} with slight change of notation. During the uplink data transmission phase, the BS in cell $j$ receives the baseband signal $\pmb{y}_{j} \in \mathbb{C}^{M \times 1}$ given by
\begin{equation}\label{eq:2}
\pmb{y}_{j} = \sum_{l=1}^{L}\nolimits \sum_{k=1}^{K}\nolimits \sqrt{\rho_{\textrm{u}}} \pmb{g}_{jkl} x_l [k] + \pmb{n}_j,
\end{equation}
where $\pmb{x}_l= \left[ x_l [1], x_l [2], ..., x_l [k] \right]^T$ is the vector of transmit signals of the users in cell $l$, $\rho_{\textrm{u}}$ is the average uplink transmit power of the users, and $\pmb{n}_j \sim \mathcal{CN} (\pmb{0}, \; \pmb{I}_M)$ is the additive Gaussian noise vector at the BS in cell $j$. Thus, $\rho_{\textrm{u}}$ can be interpreted as the uplink transmit SNR of the users. 
\subsection{CSI Estimation at BS}\label{sec:csi}
Similar to the approach of \cite{adhikary2017uplink} for CSI estimation, it is assumed that the same set of pilot sequences $\pmb{\psi}_1, \pmb{\psi}_2, ..., \pmb{\psi}_K \in \mathbb{C}^{\tau \times 1}$ of length $\tau$ (usually $\tau \geq K$, however without essential loss of generality we assume $\tau=K$) are used in all cells and thus the channel estimate will be corrupted by the PC from the adjacent cells. Defining the pilot matrix $\pmb{\Psi} = \left[ \pmb{\psi}_1, \pmb{\psi}_2, ..., \pmb{\psi}_K \right]^T \in \mathbb{C}^{K \times K}$, we assume orthonormal pilots $\pmb{\Psi} \pmb{\Psi}^{\dagger} = I_K$.

During the uplink training phase of the TDD protocol, user $k=1, 2, ..., K$  in each cell transmits the pilot sequence $\pmb{\psi}_k$ to its BSs. The BS in cell $j$ then finds the estimate $\pmb{\hat{G}}_{jj}$ of the local channels $\pmb{G}_{jj}$. More specifically, the BS in cell $j$ receives the matrix $\pmb{Y}_j^{\textrm{p}} \in \mathbb{C}^{M \times K}$, i.e.,
\begin{equation}\label{eq:4}
\pmb{Y}_j^{\textrm{p}} = \sum_{l=1}^{L} \nolimits \sqrt{\rho_{\textrm{p}}} \pmb{G}_{jl} \pmb{\Psi} + \pmb{Z}_j,
\end{equation}
where $\rho_{\textrm{p}}$ is the average pilot transmission power, and $\pmb{Z}_j$ is the AWGN at the BS with entries that are iid $\mathcal{CN} (0, 1)$ random variables. Similar to the uplink data transmission, $\rho_{\textrm{p}}$ can be interpreted as the pilot SNR. Generally, $\rho_\textrm{p}$ is a function of the average transmit power of users $\rho_{\textrm{u}}$ and the length of pilot sequences $\tau$. 

Multiplying $\pmb{Y}_j^{\textrm{p}}$ by $\pmb{\Psi}^{\dag}$, the $k^{\textrm{th}}$ column of the resulting matrix is
\begin{equation}\label{eq:5}
\pmb{r}_{jk} = \sum_{l=1}^L \nolimits \sqrt{\rho_{\textrm{p}}} \pmb{g}_{jkl} + \pmb{\tilde{z}}_{jk}, 
\end{equation} 
where $\pmb{\tilde{z}}_{jk} \sim \mathcal{CN} (\pmb{0}, \; \pmb{I}_M)$. The MMSE estimate $\pmb{\hat{g}}_{jkj}$ of $\pmb{g}_{jkj}$ based on the observation $\pmb{r}_{jk}$ is given by \cite{sengijpta1995fundamentals}
\begin{align}\label{eq:6}
\pmb{\hat{g}}_{jkj} &= \mathbb{E} \left[ \pmb{g}_{jkj} \pmb{r}_{jk}^{\dagger} \right] \mathbb{E} \left[ \pmb{r}_{jk} \pmb{r}_{jk}^{\dagger} \right]^{-1} \pmb{r}_{jk}  \\
&= \sqrt{\rho_{\textrm{p}}}\beta_{jkj} \left( 1 + \rho_{\textrm{p}} \sum_{l=1}^L \nolimits \beta_{jkl} \right) ^{-1} \pmb{r}_{jk}  \\
&= \alpha_{jkj} \left( \sum_{l=1}^L \nolimits \sqrt{\rho_{\textrm{p}}} \pmb{g}_{jkl} + \pmb{\tilde{z}}_{jk} \right), \label{eq:6:1}
\end{align}
where $\alpha_{jkj} := \frac{\sqrt{\rho_{\textrm{p}}} \beta_{jkj}}{1 + \rho_{\textrm{p}}\sum_{l=1}^L \beta_{jkl}}$. Due to the orthogonality property of MMSE estimation, one can decompose the channel $\pmb{g}_{jkj}$ as $\pmb{g}_{jkj} = \pmb{\hat{g}}_{jkj} + \pmb{\epsilon}_{jkj}$, where $\pmb{\epsilon}_{jkj}$ is the estimation error. It is also known that $\pmb{\epsilon}_{jkj}$ is uncorrelated (and thus independent due to the Gaussian assumption) from the estimate $\pmb{\hat{g}}_{jkj}$ \cite{adhikary2017uplink}. Therefore, it can be verified that $\pmb{\hat{g}}_{jkj} \sim \mathcal{CN} ( \pmb{0}, \; \sqrt{\rho_{\textrm{p}}} \beta_{jkj} \alpha_{jkj} \pmb{I}_M )$ and $\pmb{\epsilon}_{jkj} \sim \mathcal{CN} ( \pmb{0}, \; \beta_{jkj} ( 1 - \sqrt{\rho_{\textrm{p}}} \alpha_{jkj} ) \pmb{I}_M ) $. Using \eqref{eq:6}, one can see that the estimate $\pmb{\hat{g}}_{jkl}$ can be written in terms of $\pmb{\hat{g}}_{jkj}$ as follows
\begin{equation}\label{eq:7}
\pmb{\hat{g}}_{jkl} = \left( \frac{\beta_{jkl} }{ \beta_{jkj} } \right) \pmb{\hat{g}}_{jkj}.
\end{equation} 
\subsection{Treating Interference as Noise (TIN)}
Assuming maximum ratio combining (MRC), from \eqref{eq:2} the estimate of the $i^{\textrm{th}}$ user's signal in cell $j$ is
\begin{align}\label{eq:8}
\hat{y}_{ji} &= \pmb{\hat{g}}_{jij}^{\dag} \pmb{y}_{j}  \\
\nonumber &= \sum_{l=1}^{L} \nolimits \sqrt{\rho_{\textrm{u}}} \pmb{\hat{g}}_{jij}^{\dag} \pmb{g}_{jil} x_l [i]  \\ 
&\hspace{3mm}+ \sum_{l=1}^L  \nolimits \sum_{k=1,k \neq i}^K \nolimits \sqrt{\rho_{\textrm{u}}} \pmb{\hat{g}}_{jij}^{\dag} \pmb{g}_{jkl} x_l [k] + \pmb{\hat{g}}_{jij}^{\dag} \pmb{n}_j. \label{eq:8:1}
\end{align} 

Substituting  \eqref{eq:6:1} in  \eqref{eq:8:1}, and applying the strong law of large numbers, as $M \rightarrow \infty$, the following is obtained \begin{equation}\label{eq:9}
\frac{\hat{y}_{ji} }{ M } \stackrel{\rm a.s.}{\longrightarrow} \sqrt{\rho_{\textrm{u}}\rho_{\textrm{p}} } \alpha_{jij} \left( \beta_{jij} x_j [i] + \sum_{l=1, l \neq j}^L \nolimits \beta_{jil} x_l [i] \right),
\end{equation}
where $\stackrel{\rm a.s.}{\longrightarrow}$ represents the almost sure convergence. Note that a channel hardening effect is observed in \eqref{eq:9}.

Assuming TIN in uplink, the BS in cell $j$ only decodes the desired signal $x_{j} [i]$ and treats the remaining interfering signals $x_{l} [i], \; l \neq j$ as noise. Thus, defining $R_{ij}$ as the uplink rate of the $i^{\textrm{th}}$ user in cell $j$, any rate tuple $\left( R_{i1}, ..., R_{iL} \right)$ is achievable if it satisfies the following set of inequalities
\begin{equation}\label{eq:12:1}
R_{ij} \leq I \left( \hat{y}_{ji}  ; x_j [i] \Big\vert \; \pmb{\hat{g}}_{jij} \right), \quad \text{for} \quad  j=1, ..., L.
\end{equation}
Based on \eqref{eq:9}, it is known that \cite{marzetta2010noncooperative}
\begin{equation}\label{eq:12:2}
I \left( \hat{y}_{ji}  ; x_j [i] \Big\vert \; \pmb{\hat{g}}_{jij}  \right) \stackrel{\textrm{a.s.}}{\longrightarrow} C \left( \dfrac{\beta_{jij}^2}{\sum_{l=1,l \neq j}^L \beta_{jil}^2} \right), \; \textrm{as} \quad \; M \rightarrow \infty,
\end{equation}
where $C(x):= \log (1+x)$ is the Shannon rate function. 
\begin{remark}\label{remark:1}
\normalfont As the number of BS antennas $M$ in \eqref{eq:9} becomes large, except for the terms associated with the pilot-sharing users, i.e., the interference caused by PC, the effects of interference and noise vanish. 
\end{remark}
Throughout the paper, it is assumed that the noisy channel estimates $\pmb{\hat{g}}_{jij}$ are known locally at the BSs. Thus, from now on to simplify notation they will be omitted from the mutual information expressions.
\section{Decoding the PC Interference}
One can see that, as the number of BS antennas $M$ becomes large enough, the expression of \eqref{eq:12:2} converges to a constant independent of $M$ and thus the benefits of increasing $M$ saturate. In other words, treating interference due to PC as noise results in a fundamental limitation that constitutes a major bottleneck in overall performance of massive MIMO systems \cite{marzetta2010noncooperative}.

In this paper, as opposed to simply performing TIN, we consider more sophisticated schemes based on interference decoding. More specifically, we treat the PC interference terms as individual users (similar to a MAC) and thus try to decode them. As will be seen in the subsequent part, this change of perspective results in new achievable rate expressions that grow without bound as $M \rightarrow \infty$.
\subsection{Simultaneous unique Decoding (SD)}
Note that in the expression of the received signal after performing MRC in \eqref{eq:9}, the first term is the desired signal and the remaining non-vanishing terms are all inter-cell interference caused by users in other cells that are sharing the same pilot sequence, $\pmb{\psi}_i, i=1,\ldots,K$, as the $i^{\rm th}$ user of cell $j$. Now, consider \eqref{eq:8:1} which is the output of the $j^{\rm th}$ BS after performing MRC. If $\hat{y}_{ji}$, for $j=1, 2, ..., L, \; i=1, ..., K$, are considered together, then these represent the output of $K$ separate/non-interfering $L$-user interference channels (ICs), one such $L$-user IC for each pilot sequence $\pmb{\psi}_i, i=1,\ldots,K$: each $L$-user IC consists of $L$ transmitters, i.e., $i^{\rm th}$ user of all cells that are using the same pilot sequence $\pmb{\psi}_i$, and $L$ receivers, i.e., the BSs. One should also note that at each of the $L$ receivers of each IC, an asymptotically noise-free $L$-user MAC is observed (see \eqref{eq:9}). For instance, in the noise-free $L$-user MAC of \eqref{eq:9}, by unique joint decoding of the users  $\left[ x_1[i], x_2[i], ..., x_L [i] \right]^T$, unbounded rates are obtained as $M \rightarrow \infty$.
\begin{remark}
\normalfont Note that since large-scale fading coefficients from contaminating users are unknown at the BS, and also the effective channel gains in the MAC of \eqref{eq:9} are functions of these coefficicents, this MAC can be regarded as a compound MAC \cite{sabharwal2008compound}, where the channel gains from users to the receiver are unknown. It has been shown in \cite{sabharwal2008compound} that the achievable rates of a compound MAC (\ied a MAC with unknown channel gains) are the same as those of the standard MAC, where all channel gains are known. Therein, it has been shown that the lack of knowledge of channel gains at the receiver \textit{does not affect} the achievable rates, i.e., the users' signal can still be successfully decoded. 
\end{remark}

Due to finite coherence time of wireless channels, the number of available orthonormal pilot sequences is smaller than $KL$ for typical values of $K$ and $L$. Thus, one way to address this issue is to re-use the \textit{same} set of orthonormal pilots across all cells as described in Section \ref{sec:csi}. However, an alternative approach to that of Section \ref{sec:csi} is to use \textit{different} sets of orthonormal pilots in different cells. To illustrate this alternative, assume that a single set of orthonormal pilots is picked for one cell, and different rotated versions of this set are used in all other cells. In particular, user $k$ in cell $l$ transmits the pilot sequence $\pmb{\psi}_{kl}$ to its BS, where the entire pilot matrix used in cell $l$ is denoted by $\pmb{\Psi}_l$. As the sequences of other pilot matrices, $\pmb{\Psi}_j, \; j \neq l \in \lbrace 1, 2, ..., L \rbrace$, are rotated versions of sequences in $\pmb{\Psi}_l$, they have non-zero inner product.

After transmission of all pilot sequences, the BS in cell $j$ receives the matrix $\pmb{Y}_j^{\textrm{p}} \in \mathbb{C}^{M \times K}$, given by
\begin{equation}\label{eq:14:14}
\pmb{Y}_j^{\textrm{p}} = \sum_{l=1}^{L} \nolimits \sqrt{\rho_{\textrm{p}}} \pmb{G}_{jl} \pmb{\Psi}_l + \pmb{Z}_j.
\end{equation}
Multiplying $\pmb{Y}_j^{\textrm{p}}$ by $\pmb{\Psi}_j^{\dag}$, the $k^{\textrm{th}}$ column of the resulting matrix is
\begin{equation}\label{eq:14:14:1}
\pmb{r}_{jk} =  \sqrt{\rho_{\textrm{p}}} \pmb{g}_{jkj}  + \sum_{l=1, l \neq j}^L \nolimits \sqrt{\rho_{\textrm{p}}} \pmb{G}_{jl} \pmb{\Psi}_l \pmb{\psi}_{kj}^{\dagger} + \pmb{q}_{jk}, 
\end{equation} 
where $ \pmb{q}_{jk} \sim \mathcal{CN} (\pmb{0}, \; \pmb{I}_M)$. Therefore, the MMSE estimate $\pmb{\hat{g}}_{jkj}$ of $\pmb{g}_{jkj}$ based on the observation $\pmb{r}_{jk}$ is 
\begin{align}
\nonumber \pmb{\hat{g}}_{jkj} &= \mathbb{E} \left[ \pmb{g}_{jkj} \pmb{r}_{jk}^{\dagger} \right] \mathbb{E} \left[ \pmb{r}_{jk} \pmb{r}_{jk}^{\dagger} \right]^{-1} \\
&\hspace{3mm}\times \left(  \sqrt{\rho_{\textrm{p}}} \pmb{g}_{jkj}
+ \sum_{l=1, l \neq j}^L \nolimits \sqrt{\rho_{\textrm{p}}} \pmb{G}_{jl} \pmb{\Psi}_l \pmb{\psi}_{kj}^{\dagger} + \pmb{q}_{jk} \right). \label{eq:14:14:2}
\end{align}
One can readily see from \eqref{eq:14:14:2} that the channel estimate $\pmb{\hat{g}}_{jkj}$ is now contaminated by the channel of all the users in other cells. Thus, after performing MRC and letting $M \rightarrow \infty$, the non-vanishing terms in \eqref{eq:9} will include the signal of every user in every other cell. In turn, when employing interference decoding schemes, using the same set of pilots in different cells results in decoding $L$ users, whereas using different sets of pilots in different cells, as explained above, results in decoding $K (L-1) + 1$ users. As will be explained later in Remark~\ref{remark:different:pilots}, this alternative approach that requires decoding $K (L-1) + 1$ users (instead of $L$ users) degrades the performance of interference decoding schemes, as compared to the approach of Section \ref{sec:csi}. Moreover, the complexity of jointly decoding $K (L-1) + 1$ users is larger than that of decoding $L$ users.  Hence, when decoding the PC interference, using the same set of pilots in different cells (as opposed to different pilots) is preferable as it results in fewer interference terms to be decoded.

We now provide a detailed analysis of the achievable rates for finite values of $M$. Following the approach of \cite{adhikary2017uplink}, by adding and subtracting a term associated with the mean of the effective channel $\pmb{\hat{g}}_{jij}^{\dag} \pmb{g}_{jil}$ in \eqref{eq:8}, the following is obtained over one coherence interval
\begin{align}
\nonumber \hat{y}_{ji} &= \underbrace{ \sqrt{\rho_{\textrm{u}}} \sum_{l=1}^L \nolimits \mathbb{E} \left[ \pmb{\hat{g}}_{jij}^{\dag} \pmb{g}_{jil} \right] x_l [i]}_{\textrm{Desired signals}} \\
\nonumber &\hspace{4mm}+ \underbrace{\sum_{l=1}^{L} \nolimits \sqrt{\rho_{\textrm{u}}} \left(  \pmb{\hat{g}}_{jij}^{\dag} \pmb{g}_{jil} - \mathbb{E} \left[ \pmb{\hat{g}}_{jij}^{\dag} \pmb{g}_{jil} \right] \right) x_l [i]}_{\textrm{Interference due to channel estimation error}} \\
&\hspace{4mm}+ \underbrace{\sum_{l=1}^L \nolimits \sum_{k=1,k \neq i}^K \nolimits \sqrt{\rho_{\textrm{u}}} \pmb{\hat{g}}_{jij}^{\dag} \pmb{g}_{jkl} x_l [k]}_{\textrm{Interference caused by other users}} + \underbrace{\pmb{\hat{g}}_{jij}^{\dag} \pmb{n}_j }_{\textrm{Noise}} \label{eq:14} \\
&= \sum_{l=1}^L \nolimits \gamma_{il} x_l [i] + z_{jij}^{\prime}, \label{eq:15}
\end{align}
where $\gamma_{il} := \sqrt{\rho_{\textrm{u}}} \mathbb{E} [ \pmb{\hat{g}}_{jij}^{\dag} \pmb{g}_{jil} ]$ and $z_{jij}^{\prime}$ is the effective noise term.

The power of the desired signals in \eqref{eq:15} is proportional to $\left\vert \gamma_{il} \right\vert^2$ and is thus proportional to $M^2$. Moreover, the power of the effective noise term $z_{jij}^{\prime}$ is proportional to $M$. Therefore, by unique joint decoding of the users' signals $\left\lbrace x_l [i] \right\rbrace_{l=1}^L $ in \eqref{eq:15}, the achievable rates of the corresponding MAC grow unboundedly as $M \rightarrow \infty$.

Note that the effective noise $z_{jij}^{\prime}$ in \eqref{eq:15} contains the last three terms in \eqref{eq:14} including the inner product $\pmb{\hat{g}}_{jij}^{\dag} \pmb{n}_j$ of two Gaussian vectors, and hence is neither Gaussian nor independent of the users' signals. However, as will be shown in the sequel, it is uncorrelated from the users' signals. The following lemma lower bounds the mutual information terms defining the boundaries of the MAC region, using a Gaussian effective noise with the same power as that of $z_{jij}^{\prime}$ in \eqref{eq:15}.

\begin{lemma}\label{lemma:worst}
Consider the $L$-user MAC given by $y = \sum_{i=1}^L x_i^G + z$, where the users' signals $x_i^G, \; i=1, ..., L$ are independent with complex Gaussian distribution $x_i^G \sim \mathcal{CN} (0, \; P_i)$, and the additive noise $z$ is a complex random variable with mean zero and variance $\sigma_z^2$. If $z$ is uncorrelated from $x_i^G, \; i=1, ..., L$, then
\begin{equation}\label{eq:w}
 I \left( \pmb{x}_{\Omega}^G ; \; y^G \Big\vert \pmb{x}_{\Omega^c}^G \right) \leq I \left( \pmb{x}_{\Omega}^G ; \; y \Big\vert \pmb{x}_{\Omega^c}^G  \right),
\end{equation}  
where $\pmb{x}_{\Omega}^G$ is the vector with entries $ x_i^G, i \in \Omega \subseteq \lbrace 1, 2, ..., L \rbrace, \Omega \neq \emptyset$, $\Omega^c \eqdef \lbrace 1, 2, ..., L \rbrace \setminus \Omega$,   $y^G = \sum_{i=1}^L x_i^G + z^G$, and $z^G \sim \mathcal{CN} (0, \; \sigma_z^2)$. 
\end{lemma}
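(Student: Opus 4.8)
The plan is to reduce the claimed inequality to a statement about the relative entropy between the true effective-noise distribution and a Gaussian with the same second-order statistics, exploiting the fact that the Gaussian maximizes differential entropy under a covariance constraint. First I would write out both mutual information quantities via the identity $I(\pmb{x}_\Omega ; y \mid \pmb{x}_{\Omega^c}) = h(y \mid \pmb{x}_{\Omega^c}) - h(y \mid \pmb{x}_\Omega, \pmb{x}_{\Omega^c})$. Conditioned on all of $\pmb{x}_{\Omega^c}$ and $\pmb{x}_\Omega$, the received signal $y$ (resp. $y^G$) is just a deterministic shift of the noise $z$ (resp. $z^G$), so the second term equals $h(z)$ in the true case and $h(z^G)$ in the Gaussian case; since $z^G$ is Gaussian with the same variance as $z$, we have $h(z^G) \ge h(z)$. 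The more delicate term is $h(y \mid \pmb{x}_{\Omega^c})$: I would argue that because $z$ is uncorrelated from every $x_i^G$, the conditional covariance of $y = \sum_{i \in \Omega} x_i^G + (\sum_{i \in \Omega^c} x_i^G + z)$ given $\pmb{x}_{\Omega^c}$ is $\sum_{i \in \Omega} P_i + \sigma_z^2$, which is exactly the conditional variance of $y^G$ given $\pmb{x}_{\Omega^c}$; then the max-entropy property of the Gaussian gives $h(y^G \mid \pmb{x}_{\Omega^c}) \ge h(y \mid \pmb{x}_{\Omega^c})$.

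Combining the two bounds directly yields
\begin{equation*}
I\left(\pmb{x}_\Omega^G ; y^G \,\big\vert\, \pmb{x}_{\Omega^c}^G\right) = h\left(y^G \,\big\vert\, \pmb{x}_{\Omega^c}^G\right) - h(z^G) \le h\left(y \,\big\vert\, \pmb{x}_{\Omega^c}^G\right) - h(z) = I\left(\pmb{x}_\Omega^G ; y \,\big\vert\, \pmb{x}_{\Omega^c}^G\right),
\end{equation*}
which is precisely \eqref{eq:w}. Note the two applications of the max-entropy inequality pull in opposite directions on the two entropy terms, but the subtraction is still monotone because the Gaussian substitution raises \emph{both} entropies; the key point is that the \emph{second} (subtracted) entropy does not depend on $\pmb{x}_\Omega^G$ at all — it is just $h$ of the noise — so there is no loss there beyond $h(z^G)\ge h(z)$, and this is compensated by the gain $h(y^G\mid\cdot)\ge h(y\mid\cdot)$ in the first term. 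I would double-check the direction by the sanity check that when $z$ is already Gaussian both sides coincide.

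The main obstacle I anticipate is making the max-entropy argument for $h(y \mid \pmb{x}_{\Omega^c}^G)$ fully rigorous: conditioning is on a continuous random vector, so one should invoke the conditional version of the ``Gaussian maximizes entropy under a covariance constraint'' principle, and one must verify that the conditional distribution of $y$ given $\pmb{x}_{\Omega^c}^G = \pmb{a}$ genuinely has variance $\sum_{i\in\Omega}P_i + \sigma_z^2$ for (almost) every $\pmb{a}$ — this uses independence of the $x_i^G$ among themselves together with the hypothesis that $z$ is merely uncorrelated (not independent) from the $x_i^G$, which is enough since only second moments enter. A secondary subtlety is the complex-valued setting: I would either treat $y$ as a two-dimensional real vector and use that a complex Gaussian is the maximum-entropy distribution among complex random variables with a given (possibly non-circular) covariance, or note that circular symmetry is inherited because the $x_i^G$ are circularly symmetric; in the paper's context $z'_{jij}$ need not be circular, so the safe route is the real $2$-D covariance-constraint version. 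Once these measure-theoretic points are handled, the rest is bookkeeping.
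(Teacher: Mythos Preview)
Your decomposition $I(\pmb{x}_\Omega^G;y\mid\pmb{x}_{\Omega^c}^G)=h(y\mid\pmb{x}_{\Omega^c}^G)-h(y\mid\pmb{x}_\Omega^G,\pmb{x}_{\Omega^c}^G)$ does not close. First, since $z$ is only \emph{uncorrelated} from the $x_i^G$, not independent, the second term is $h(z\mid\pmb{x}^G)$, not $h(z)$ as you write. More importantly, even granting both max-entropy bounds, they point the wrong way for the conclusion: you have $h(y^G\mid\pmb{x}_{\Omega^c}^G)\ge h(y\mid\pmb{x}_{\Omega^c}^G)$ and $h(z^G)\ge h(z\mid\pmb{x}^G)$, so in the difference
\[
\bigl[h(y^G\mid\pmb{x}_{\Omega^c}^G)-h(z^G)\bigr]\ \text{versus}\ \bigl[h(y\mid\pmb{x}_{\Omega^c}^G)-h(z\mid\pmb{x}^G)\bigr]
\]
the first bracketed term is larger in its positive part and larger in its subtracted part; nothing forces the net difference to have a sign. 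Your ``compensation'' paragraph is not an argument---it just restates that both inequalities hold, which is exactly why the subtraction is ambiguous. (A data-processing-type statement like ``convolving $z$ with the Gaussian $\sum_{i\in\Omega}x_i^G$ reduces non-Gaussianity'' would rescue this, but that needs independence of $z$ from the inputs, which you do not have.) Your claim that the conditional variance of $y$ given $\pmb{x}_{\Omega^c}^G=\pmb a$ equals $\sum_{i\in\Omega}P_i+\sigma_z^2$ for every $\pmb a$ is also false under mere uncorrelatedness; only the \emph{average} conditional variance is controlled.

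The paper avoids all of this by expanding in the \emph{input} entropy instead:
\[
I(\pmb{x}_\Omega^G;y\mid\pmb{x}_{\Omega^c}^G)=h(\pmb{x}_\Omega^G)-h(\pmb{x}_\Omega^G\mid y,\pmb{x}_{\Omega^c}^G).
\]
The first term is identical for $y$ and $y^G$, so only one inequality is needed. For the second term the paper subtracts the LMMSE estimate, $h(x_i^G-\alpha_i\tilde y_i\mid\cdots)\le h(x_i^G-\alpha_i\tilde y_i)\le\log\bigl(\pi e\,\mathrm{var}[x_i^G-\alpha_i\tilde y_i]\bigr)$, and the LMMSE error variance depends only on second-order statistics, which uncorrelatedness fixes to be the same as in the Gaussian-noise case. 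A telescoping product then gives exactly $\log(1+\sum_{i\in\Omega}P_i/\sigma_z^2)$. This input-side route is the missing idea: it replaces your two competing bounds by a single one-sided bound whose tightness is governed solely by covariances.
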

\begin{proof}
See Appendix A.
\end{proof}
Note that using Lemma \ref{lemma:worst} one can obtain an achievable lower bound on the capacity of a MAC with uncorrelated additive non-Gaussian noise by replacing the noise term with an independent zero mean Gaussian noise having the same variance. This is known as the worst-case uncorrelated noise result which has been previously established for the case of a point-to-point channel in the literature \cite{hassibi2003much}. When the additive noise is independent of the users' signals, Gaussian noise has been proven to be the worst-case noise for point-to-point, MAC, degraded broadcast and MIMO channels \cite{el2011network}. However, the proof provided in Lemma \ref{lemma:worst} only requires the additive noise to be uncorrelated of the users' signals.  

We now consider the MAC of \eqref{eq:15} at BS $j$. Using the usual definitions as in \cite{cover2012elements}, each message $m_l \in [ 1 : 2^{n R_{il}} ],\; l=1, ..., L $ (distributed uniformly) is encoded into the codeword $\pmb{x}_l^{n} [i] (m_l)$ of length $n$ which is generated iid $\mathcal{CN} (0, \; 1)$. Using SD and the standard random coding analysis as in \cite{cover2012elements}, it can be shown that decoding error probability tends to zero as $n \rightarrow \infty$, i.e., the rate tuple $\left( R_{i1}, ..., R_{iL} \right)$ is achievable, if
\begin{equation}\label{eq:16}
R_{\Omega} := \sum_{l \in \Omega} \nolimits R_{il} \leq I \left(  \hat{y}_{ji} ; \; \pmb{x}_{\Omega} \; \Big\vert \; \pmb{x}_{\Omega^c} \right),
\end{equation}
where $S= \lbrace 1, 2, ..., L\rbrace$, $\Omega \subseteq S$, and $\pmb{x}_{\Omega}$ is the vector with entries $ x_l [i], l \in \Omega$. Finally, to obtain the achievable region network-wide (at all BSs), one should take the intersection of achievable regions over all BSs. It should be pointed out that recent studies in the literature have proposed practical schemes using off-the-shelf LDPC codes that can achieve a performance very close to the theoretical SD \cite{zhu2017gaussian, sula2019compute}.

Note that $x_l [i]$ and $x_j [k]$ are independent for $(l, i) \neq (j, k)$, $\pmb{g}_{jil}$ and $\pmb{g}_{mkn}$ are independent for $(j,i,l) \neq (m, k, n)$, and also $\pmb{\hat{g}}_{jil}$ and $\pmb{\epsilon}_{jil}$ are uncorrelated. Therefore, for transmission over multiple coherence intervals all interference and noise terms in \eqref{eq:14} are uncorrelated from the desired signal components. Thus applying Lemma \ref{lemma:worst}, an achievable lower bound to $ I (  \hat{y}_{ji} ; \; \pmb{x}_{\Omega} \; \big\vert \; \pmb{x}_{\Omega^c} )$ in \eqref{eq:16} is obtained, similar to the ones established in \cite{adhikary2017uplink}. More specifically, replacing the effective noise $z_{jij}^{\prime}$ in \eqref{eq:15} by an independent Gaussian noise with a variance equal to the sum of the variances of the interference and noise terms in \eqref{eq:14}, provides a lower bound in \eqref{eq:16}. This is formally presented in the following theorem. 
\begin{theorem}\label{lemma:2}
Assuming $\pmb{x}_l= \left[ x_l [1], x_l [2], ..., x_l [K] \right]^T \sim \mathcal{CN} \left( \pmb{0}, \; \pmb{I}_K \right)$ for $l \in \lbrace 1, 2, ..., L \rbrace $, the following set of lower bounds can be achieved for the MAC given in \eqref{eq:15} at BS $j$
\begin{align}
I \left(  \hat{y}_{ji} ; \; \pmb{x}_{\Omega} \; \Big\vert \; \pmb{x}_{\Omega^c} \right) \geq C \left( \frac{ P_1 }{P_2 + P_3 + P_4} \right):= C_{\rm LB}(\Omega), \label{eq:16:1}
\end{align} 
which holds for $\forall \Omega \subseteq \lbrace 1, 2, ..., L \rbrace$, where
\begin{align}
&P_1 = M^2 \sum_{l \in \Omega} \nolimits \rho_{\rm p} \rho_{\rm u}   \beta_{jil}^2 \alpha_{jij}^2, \label{eq:17:1} \\
&P_2 = M  \sqrt{\rho_{\rm p}} \beta_{jij} \alpha_{jij} \sum_{l=1}^L \nolimits \rho_{\rm u} \beta_{jil}, \label{eq:17:2} \\
&P_3 = M  \sqrt{\rho_{\rm p}} \beta_{jij} \alpha_{jij} \sum_{l=1}^L \nolimits \sum_{k=1, k \neq i}^K \nolimits  \rho_{\rm u}  \beta_{jkl}, \label{eq:17:3} \\ 
&P_4 = M \sqrt{\rho_{\rm p}} \beta_{jij} \alpha_{jij}. \label{eq:17:4}
\end{align}
\end{theorem}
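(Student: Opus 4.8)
The plan is to lower-bound each mutual information term $I(\hat{y}_{ji};\pmb{x}_{\Omega}\mid\pmb{x}_{\Omega^c})$ on the right of \eqref{eq:16} by first invoking Lemma~\ref{lemma:worst} to replace the non-Gaussian effective noise $z_{jij}^{\prime}$ of \eqref{eq:15} by a circularly-symmetric Gaussian noise of the same variance, and then to identify that variance as $P_2+P_3+P_4$ and the desired-signal power as $P_1$ through direct second-moment computations. Achievability of the rates satisfying \eqref{eq:16} has already been established by the random-coding argument preceding the theorem, so only the lower bound on its right-hand side is left to prove.

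\textbf{Reduction to a Gaussian MAC.} Setting $x_l^{G}:=\gamma_{il}\,x_l[i]$ --- an invertible reparametrization of the inputs since $\gamma_{il}\neq0$, with $x_l^{G}\sim\mathcal{CN}(0,|\gamma_{il}|^{2})$ by the hypothesis $\pmb{x}_l\sim\mathcal{CN}(\pmb{0},\pmb{I}_K)$ --- equation \eqref{eq:15} becomes $\hat{y}_{ji}=\sum_{l=1}^{L}x_l^{G}+z_{jij}^{\prime}$. As observed just before the theorem, over multiple coherence intervals each of the last three terms of \eqref{eq:14} is uncorrelated with every $x_l[i]$, so $z_{jij}^{\prime}$ is uncorrelated with each $x_l^{G}$ and Lemma~\ref{lemma:worst} applies, yielding
\[
I\!\left(\hat{y}_{ji};\pmb{x}_{\Omega}\,\Big\vert\,\pmb{x}_{\Omega^c}\right)
=I\!\left(\pmb{x}_{\Omega}^{G};\hat{y}_{ji}\,\Big\vert\,\pmb{x}_{\Omega^c}^{G}\right)
\geq I\!\left(\pmb{x}_{\Omega}^{G};y^{G}\,\Big\vert\,\pmb{x}_{\Omega^c}^{G}\right)
=C\!\left(\frac{\sum_{l\in\Omega}|\gamma_{il}|^{2}}{\textrm{var}(z_{jij}^{\prime})}\right),
\]
the last equality being the standard Gaussian-MAC evaluation: conditioning on $\pmb{x}_{\Omega^c}^{G}$ strips off those inputs and leaves an AWGN channel with aggregate Gaussian input power $\sum_{l\in\Omega}|\gamma_{il}|^{2}$ and noise power $\textrm{var}(z_{jij}^{\prime})$. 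It thus remains to show $\sum_{l\in\Omega}|\gamma_{il}|^{2}=P_1$ and $\textrm{var}(z_{jij}^{\prime})=P_2+P_3+P_4$.

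\textbf{Signal power and the two easy noise terms.} By orthogonality of the MMSE estimate, $\mathbb{E}[\pmb{\hat{g}}_{jij}^{\dag}\pmb{g}_{jil}]=\mathbb{E}[\pmb{\hat{g}}_{jij}^{\dag}\pmb{\hat{g}}_{jil}]$, which by \eqref{eq:7} and $\pmb{\hat{g}}_{jij}\sim\mathcal{CN}(\pmb{0},\sqrt{\rho_{\textrm{p}}}\beta_{jij}\alpha_{jij}\pmb{I}_M)$ equals $(\beta_{jil}/\beta_{jij})M\sqrt{\rho_{\textrm{p}}}\beta_{jij}\alpha_{jij}$; hence $\gamma_{il}=M\sqrt{\rho_{\textrm{u}}\rho_{\textrm{p}}}\,\beta_{jil}\alpha_{jij}$ and $\sum_{l\in\Omega}|\gamma_{il}|^{2}=P_1$. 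Since the last three terms of \eqref{eq:14} are excited respectively by $\pmb{n}_j$, by $\{x_l[k]\}_{k\neq i}$, and by $\{x_l[i]\}$, pairwise zero-mean arguments show they are mutually uncorrelated, so $\textrm{var}(z_{jij}^{\prime})$ is the sum of their variances. The noise term contributes $\mathbb{E}[\|\pmb{\hat{g}}_{jij}\|^{2}]=M\sqrt{\rho_{\textrm{p}}}\beta_{jij}\alpha_{jij}=P_4$. For the other-users term, $\pmb{\hat{g}}_{jij}$ is a function of $\pmb{r}_{ji}$ alone and hence independent of $\pmb{g}_{jkl}$ for $k\neq i$; the summands indexed by distinct $(l,k)$ are uncorrelated, each equal to $\rho_{\textrm{u}}\mathbb{E}[|\pmb{\hat{g}}_{jij}^{\dag}\pmb{g}_{jkl}|^{2}]=\rho_{\textrm{u}}M\sqrt{\rho_{\textrm{p}}}\beta_{jij}\alpha_{jij}\beta_{jkl}$, and the double sum is $P_3$.

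\textbf{The estimation-error term $P_2$ --- the main obstacle.} What remains is $\textrm{var}(\pmb{\hat{g}}_{jij}^{\dag}\pmb{g}_{jil})$ for each $l$. Decomposing $\pmb{g}_{jil}=(\beta_{jil}/\beta_{jij})\pmb{\hat{g}}_{jij}+\pmb{\epsilon}_{jil}$, with $\pmb{\epsilon}_{jil}$ independent of $\pmb{\hat{g}}_{jij}$ by MMSE orthogonality, this equals $(\beta_{jil}/\beta_{jij})^{2}\textrm{var}(\|\pmb{\hat{g}}_{jij}\|^{2})+\textrm{var}(\pmb{\hat{g}}_{jij}^{\dag}\pmb{\epsilon}_{jil})$. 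The delicate point is that although $\pmb{\hat{g}}_{jij}^{\dag}\pmb{g}_{jil}$ has a mean of order $M$, its variance is only $\mathcal{O}(M)$ rather than $\mathcal{O}(M^{2})$: Gaussian concentration gives $\textrm{var}(\|\pmb{\hat{g}}_{jij}\|^{2})=M\rho_{\textrm{p}}\beta_{jij}^{2}\alpha_{jij}^{2}$ (an $\mathcal{O}(M)$ quantity, not the $\mathcal{O}(M^{2})$ value $(\mathbb{E}\|\pmb{\hat{g}}_{jij}\|^{2})^{2}$), and after substituting the per-component variance of $\pmb{\epsilon}_{jil}$ --- which is fixed by $\textrm{var}(\pmb{g}_{jil})=\beta_{jil}$ --- the two contributions combine to $\textrm{var}(\pmb{\hat{g}}_{jij}^{\dag}\pmb{g}_{jil})=M\sqrt{\rho_{\textrm{p}}}\beta_{jij}\alpha_{jij}\beta_{jil}$; multiplying by $\rho_{\textrm{u}}$ and summing over $l$ gives $P_2$. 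Equivalently, computing the variance as $\mathbb{E}[|\pmb{\hat{g}}_{jij}^{\dag}\pmb{g}_{jil}|^{2}]-|\mathbb{E}[\pmb{\hat{g}}_{jij}^{\dag}\pmb{g}_{jil}]|^{2}$, the $\mathcal{O}(M^{2})$ pieces cancel and leave the $\mathcal{O}(M)$ remainder. This cancellation --- which is exactly what makes the achievable SINR $P_1/(P_2+P_3+P_4)$ scale like $M$ rather than saturate --- together with keeping straight the several independence/uncorrelatedness assertions used above, is the only real work; the rest is routine bookkeeping. Combining the three pieces gives $\textrm{var}(z_{jij}^{\prime})=P_2+P_3+P_4$, and the theorem follows.
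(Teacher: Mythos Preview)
Your proof is correct and follows essentially the same route as the paper's Appendix~B: invoke Lemma~\ref{lemma:worst} to replace $z_{jij}^{\prime}$ by Gaussian noise of the same variance, then compute $P_1,P_2,P_3,P_4$ via the MMSE decomposition $\pmb{g}_{jil}=\pmb{\hat{g}}_{jil}+\pmb{\epsilon}_{jil}$ and the relation \eqref{eq:7}. The paper carries out the four moment computations in the same order with slightly more intermediate algebra displayed; your added commentary on the $\mathcal{O}(M)$ versus $\mathcal{O}(M^{2})$ cancellation in the $P_2$ term is a nice clarification but not a different argument.
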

\begin{proof}
See Appendix B.
\end{proof}
\begin{remark}
\normalfont Note that $P_1$ is the power of the desired signal components associated with $\pmb{x}_{\Omega}$, whereas $P_2$ is the power of the interference due to the channel estimation error, $P_3$ is the power of the interference of other users, and $P_4$ is the power of the noise.  
\end{remark}
Using \eqref{eq:17:1}-\eqref{eq:17:4}, one can simplify the right hand side of \eqref{eq:16:1} as
\begin{align}
C_{\rm LB}(\Omega) = C \left( \dfrac{ M \sum_{l \in \Omega} \sqrt{\rho_{\textrm{p}}} \rho_{\textrm{u}}   \beta_{jil} \alpha_{jil} }{   \sum_{l=1}^L \sum_{k=1}^K \rho_{\textrm{u}} \beta_{jkl} +1 } \right), \quad \forall \Omega \subseteq \lbrace 1, 2, ..., L \rbrace \label{eq:17:6}
\end{align}
which follows from the fact that $ \beta_{jil}  \alpha_{jij} = \beta_{jij} \alpha_{jil}$. Therefore, it is clear that $I (  \hat{y}_{ji} ; \; \pmb{x}_{\Omega} \; \Big\vert \; \pmb{x}_{\Omega^c} ) \geq C (M \times \kappa)$, where $\kappa$ is a function of $\rho_{\textrm{u}}$, $ \rho_{\textrm{p}}$ and large-scale fading coefficients, and is also constant. Thus, the uplink achievable rates in \eqref{eq:16:1} grow as $\mathcal{O} (\log M)$. 

It is important to note that the BS $j$ is only interested in correct decoding of $x_{j}  [i]$ in uplink. Thus, incorrectly decoding $x_l [i], \; l \neq j$, should not penalize the rates achievable at BS $j$. Furthermore, the power of received signal for the users located in distant cells is very small, and thus trying to decode signals of such users can reduce achieved rates considerably. As later illustrated in the paper, there exist scenarios where system performance is constrained by these distant cells, which motivates the need for more advanced decoding schemes.

\subsection{Simultaneous Non-unique Decoding (SND)}\label{subsection:SND}
In this part, we investigate the benefit of using SND and further show that it enlarges the region obtained by SD for finite $M$. The optimality of this decoding scheme for interference networks with point-to-point codes and time-sharing has been shown in \cite{bandemer2015optimal}. Associated with the estimate of $x_l [i]$ in cell $l$, we consider an IC that consists of the $L$ senders, i.e., the $i^{\textrm{th}}$ user in each of cell $l=1, ..., L$, and the $L$ BS receivers. In particular, the BS $j \in S = \lbrace 1, 2, ..., L \rbrace$ simultaneously decodes the intended message $x_j [i]$ and the interference signals $x_l [i], \; l \neq j$, where incorrect decoding of the interference signals does not incur any penalty. More precisely, BS $j$ finds the unique message $\hat{m}_j$ such that $ ( \hat{\pmb{x}}_j^{n} [i] (\hat{m}_j)$, $\pmb{\hat{x}}_{S \setminus \lbrace j \rbrace}^{n} [i] (m_{S \setminus \lbrace j \rbrace})$, $\pmb{\hat{y}}_{ji}^{n} )$ is \emph{jointly typical} for some $m_{S \setminus \lbrace j \rbrace}$, where $\pmb{\hat{x}}_{S \setminus \lbrace j \rbrace}^{n} [i] (m_{S \setminus \lbrace j \rbrace})$ is the tuple of all codewords $\pmb{\hat{x}}_l^{n}[i](m_l)$ for $l \in S\setminus \lbrace j \rbrace $. For a comprehensive treatment of random code ensembles and joint typicality, we refer the reader to \cite[Chap.~3]{cover2012elements}.

It has been shown in \cite{bandemer2015optimal} that, assuming point-to-point random code ensembles, the capacity region of the IC associated with the $i^{\rm th}$ users across the $L$ cells can be described by
\begin{equation}\label{eq:23}
\mathcal{C}_i = \bigcap_{j \in S} \mathcal{R}_{ji},
\end{equation} 
where $\mathcal{R}_{ji}$ is the rate region achievable at BS $j$ given by
\begin{equation}\label{eq:24}
\mathcal{R}_{ji} = \bigcup_{\lbrace j \rbrace \subseteq \Omega \subseteq S} \mathcal{R}_{\textrm{MAC}(\Omega,j)}^i,
\end{equation}
and $\mathcal{R}_{\textrm{MAC}(\Omega,j)}^i$ represents the achievable rate region obtained from unique joint decoding of the signals $x_l [i], \; l \in \Omega$ at BS $j$. Note that $\Omega$ at BS $j$ must contain the index of the desired signal $x_j [i]$. 

The rate region $\mathcal{R}_{\textrm{MAC}(\Omega,j)}^i$ has the following properties:
\begin{description}
\item[{[P1]}] The region does not include the rates $R_{il}, \; l \in  \Omega^c$, and is thus unbounded in these variables.
\item[{[P2]}] The signals $x_l [i], \; l \in \Omega^c$, are treated as noise in the rate expressions defining the region.
\end{description}
One can readily see that $\mathcal{R}_{ji}$ strictly contains the MAC region at BS $j$. Therefore, the capacity region $\mathcal{C}_i$ in \eqref{eq:23} (obtained from SND) is strictly larger than the intersection of the MAC regions (obtained from SD) at BSs $l=1, ..., L$. Another important observation is that, due to [P2], $\mathcal{R}_{ji}$ also contains the TIN region (a similar observation was also made in \cite{bandemer2015optimal} and \cite{bandemer2011interference}) and thus parts of region $\mathcal{R}_{ji}$'s boundary remain constant when $M \rightarrow \infty$. It is also worth mentioning that a low complexity technique, called sliding-window coded modulation (SWCM) has been recently proposed in the literature that can achieve a performance close to that of the theoretical SND, while outperforming TIN in the \textit{strong} interference regime \cite{wang2014sliding, park2014interference, kim2015adaptive, kim2016interference}. We will see in the next section that depending on the number of BS antennas and geometry of the cells, the use of SND automatically specifies the optimal subset of signals that should be jointly decoded while the remaining signals will be treated as noise. 

Note that in SD, the decoder attempts to uniquely decode the message tuple of all users (i.e., the intended one as well as the interfering users), as in a MAC. While in SND the decoder attempts to decode only the intended message uniquely and the messages of interfering users non-uniquely. More specifically, for SND the decoder needs to perform jointly typical decoding of all possible message tuples that include the message of the intended user (i.e., the intended message only, all $2$-message tuples containing the intended message, ..., all $(n-1)$-message tuples containing the intended message and the only $n$-message tuple) as in \eqref{eq:24}. Hence, the SND decoder is more complex than that of SD.
\begin{remark}
\normalfont Note that there exists a \textit{complexity-performance trade-off} between the two interference decoding schemes SND and SD, and also between SD/SND and TIN. As explained above, the SD scheme attempts to decode all users, including those that have weak interference. Hence, while SD requires \textit{less complexity} than SND, as will be seen later in the paper, it achieves \textit{worse rates} than SND. In contrast, SND is able to adaptively determine whether a user should be decoded or treated as noise based on the strength of the interference. Hence, even though SND has \textit{more complexity} than SD, it achieves \textit{larger rates} than SD. A similar trade-off exists between SD/SND and TIN. Specifically, while the proposed schemes of SD/SND have \textit{more complexity} than TIN as they need to decode additional users, for sufficiently large number of antennas $M$, the rates achieved by TIN \textit{saturate} to a fixed value that does not increase with $M$. In contrast, the rates for SND/SD \textit{increase} as $\mathcal{O}(\log M)$, and hence as $M \rightarrow \infty$, \textit{unbounded rates} are obtained. 
\end{remark}
 \begin{remark}\label{remark:SIC}
\normalfont Note that the successive interfernce cancelation (SIC) technique used in \cite{van2016joint} is different from the SND/SD of this paper in the following manner: the work of \cite{van2016joint} considers a setting in downlink where each user is served by all BSs through the reception of $L$ independent data symbols from the $L$ BSs. In particular, each user applies SIC to sequentially decode the $L$ intended data symbols transmitted by the BSs, while treating all interfering signals, including pilot-sharing interfering signals, as noise, thus resulting in the rate saturation problem. This is in contrast to the approach proposed in this paper. As the BSs try to jointly decode (either uniquely or non-uniquely) the intended signal along with the signal coming from the pilot-sharing users, there is no rate saturation as $M$ increases.   
\end{remark}

\subsection{A Simplified Subset of SND (S-SND)}
We now consider a simplified achievable region which is a subset of SND and also described in \cite[Eq. (6.5)]{el2011network}. We refer to this region as S-SND, which is given by the following set of inequalities at BS $j$
\begin{equation}\label{eq:25}
\sum_{l \in \Omega} \nolimits R_{il} \leq I \left(  \hat{y}_{ji} ; \; \pmb{x}_{\Omega} \; \Big\vert \; \pmb{x}_{\Omega^c}\right),
\end{equation}
for all $\Omega$ such that $\lbrace j \rbrace \subseteq \Omega \subseteq \lbrace 1, 2, ..., L \rbrace$.

One can directly verify that region S-SND can be obtained from SD by removing all $2^{L-1}-1$ inequalities in \eqref{eq:16} that do not involve the rate $R_{ij}$. Hence, the region SD is strictly contained in S-SND. Furthermore, due to Theorem \ref{lemma:2} it can be verified that the boundaries of S-SND in \eqref{eq:25} grow as $\mathcal{O} (\log M)$. 

The motivation behind considering this region is as follows. It will be shown in the next section that, as opposed to SND, the S-SND region is in the form of a convex polytope which makes it tractable for computing the maximum symmetric rate allocation. Therefore, even though for large networks (e.g., more than 3 cells) the maximum symmetric rate of SND can not be computed in a computationally efficient way, it is feasible under S-SND. Furthermore, since S-SND is a subset of SND, it provides a lower bound to SND. As will be shown later, there are cases where S-SND strictly outperforms other schemes (e.g., it strictly outperforms SD in the low SINR regime). Thus, based on these findings we are able to draw conclusions regarding the performance of SND.
\section{Maximum Symmetric Rate Allocation}
Considering \eqref{eq:15}, it is evident that users with relatively small effective channel gains $\gamma_{il}, \; l \neq i,$ suffer from smaller rates compared to those users with stronger channels. Therefore, it is crucial to assure fairness among users when allocating resources in cellular networks. As such, we study the problem of maximum symmetric rate allocation policy (which is the same as maximizing the minimum achievable rate among all users) for various schemes. More specifically, we will compare the performance of all interference decoding schemes SD/SND/S-SND with that of TIN based on the maximum symmetric rate they can offer. In what follows, the analysis is shown only for the $i^{\rm th}$ ($i$ is arbitrary) users across multiple cells that are employing the same pilots, since the same results hold for other sets of pilot-sharing users.

A rate allocation is said to be symmetric when all users are assigned the same rate. Thus, the maximum symmetric rate associated with BS $j$ is obtained by $R_{\textrm{Sym}}^j = \max R$ such that the rate vector $\pmb{R} = \left[ R, R, ..., R \right]^T$ belongs to the achievable region at BS $j$. Therefore, the rate vector $[ R_{\textrm{Sym}}^j, R_{\textrm{Sym}}^j, ..., R_{\textrm{Sym}}^j ]^T$ must lie at the intersection of the diagonal $\left( R_{i1} = ... = R_{iL} \right)$ with the boundary of the achievable region at BS $j$.

One can verify that the SD region described in \eqref{eq:16} (achieved at BS $j$) can be represented as the intersection of a finite number of closed half-spaces and is also bounded. Hence, it is a convex polytope, denoted by $\mathscr{R}_j$, shown below
\begin{equation}\label{eq:26:1}
\mathscr{R}_j = \left\lbrace \left[ R_{i1}, ..., R_{iL} \right]^T \hspace{-1mm}:\hspace{-1mm}\sum_{l \in \Omega} \nolimits  R_{il} \leq g_{j}(\Omega), \forall \Omega \subseteq \left\lbrace 1, 2, ..., L \right\rbrace   \right\rbrace \hspace{-1mm},
\end{equation}
where the function $g_{j}(\Omega)$ is the r.h.s of the inequality in \eqref{eq:16}. Similarly, it can be verified that the region S-SND in \eqref{eq:25} is of the form \eqref{eq:26:1} except now $g_{j}(\Omega) = \infty$ if $ j \notin \Omega$, and is also a convex polytope.

The following lemma can be used to find the maximum possible value for the minimum entry of a vector $\pmb{R}$, where $\pmb{R} \in \mathscr{R}_j$. 
\begin{lemma}\label{lemma:fair} 
In the polytope $\mathscr{R}_j$, define
\begin{align}\label{min}
&\pi = \max \quad \min_{i \in S} \quad R_i \\
&\textup{subject to} \; \left[ R_1, ..., R_L \right]^T \in \mathscr{R}_j,
\end{align}
where $S = \lbrace 1, 2, ..., L \rbrace$. Then,
\begin{align} \label{min:1}
\pi =  \min_{\Omega \subseteq \left\lbrace 1, 2, ..., L \right\rbrace, \Omega \neq \emptyset} \frac { g_{j}(\Omega) }{   \vert \Omega \vert }.
\end{align}
\end{lemma}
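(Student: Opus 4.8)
The plan is to establish the identity \eqref{min:1} by proving the two inequalities $\pi \le \min_{\Omega} g_j(\Omega)/|\Omega|$ and $\pi \ge \min_{\Omega} g_j(\Omega)/|\Omega|$ separately. The first inequality will follow directly from feasibility of any candidate rate vector, and the second will be obtained constructively by exhibiting an explicit feasible point that attains the claimed value; because that point is produced explicitly, the supremum in \eqref{min} is in fact a maximum.

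First I would prove the upper bound. Take any feasible $\pmb{R} = [R_1,\dots,R_L]^T \in \mathscr{R}_j$ and set $r := \min_{i\in S} R_i$. For every nonempty $\Omega \subseteq S$, the defining constraint in \eqref{eq:26:1} together with $R_l \ge r$ for all $l$ gives $|\Omega|\, r \le \sum_{l\in\Omega} R_l \le g_j(\Omega)$, hence $r \le g_j(\Omega)/|\Omega|$. Since this holds for every nonempty $\Omega$, we obtain $r \le \min_{\Omega\ne\emptyset} g_j(\Omega)/|\Omega|$, and taking the supremum over feasible $\pmb{R}$ yields $\pi \le \min_{\Omega\ne\emptyset} g_j(\Omega)/|\Omega|$.

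Next I would prove achievability. Let $\pi^\star := \min_{\Omega\ne\emptyset} g_j(\Omega)/|\Omega|$; this is a minimum over the finite nonempty family of nonempty subsets of $S$, hence well defined and attained. Consider the symmetric vector $\pmb{R}^\star = [\pi^\star,\dots,\pi^\star]^T$. For every nonempty $\Omega\subseteq S$ the corresponding constraint in \eqref{eq:26:1} reads $|\Omega|\,\pi^\star \le g_j(\Omega)$, which holds by the definition of $\pi^\star$; hence $\pmb{R}^\star \in \mathscr{R}_j$. Since the minimum entry of $\pmb{R}^\star$ is $\pi^\star$, we conclude $\pi \ge \pi^\star$. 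Combining with the upper bound gives \eqref{min:1}. The same argument applies verbatim to the S-SND polytope, where $g_j(\Omega) = \infty$ whenever $j\notin\Omega$: such constraints are vacuous in \eqref{eq:26:1}, and the corresponding ratios $g_j(\Omega)/|\Omega| = \infty$ simply do not participate in the minimum on the right-hand side of \eqref{min:1}.

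I do not expect a deep obstacle: the statement is essentially that the egalitarian (symmetric) allocation is max-min optimal over a down-monotone region cut out by subset-sum constraints, and crucially no submodularity of $g_j$ is needed. The only points that require a little care are confirming that the optimum is attained — handled by the explicit construction of $\pmb{R}^\star$ — and the bookkeeping for S-SND where some $g_j(\Omega)$ are infinite, as noted above.
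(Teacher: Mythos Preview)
Your proposal is correct and follows essentially the same approach as the paper's own proof: both establish the upper bound by fixing an arbitrary feasible $\pmb{R}$, noting $\min_i R_i \le \sum_{l\in\Omega} R_l/|\Omega| \le g_j(\Omega)/|\Omega|$ for every nonempty $\Omega$, and then verify achievability by checking that the symmetric vector $(\pi_0,\dots,\pi_0)$ with $\pi_0 = \min_{\Omega\neq\emptyset} g_j(\Omega)/|\Omega|$ is feasible. Your additional remarks on attainment of the maximum, the S-SND case with $g_j(\Omega)=\infty$, and the irrelevance of submodularity are all accurate elaborations beyond what the paper states explicitly.
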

\begin{proof}
Following the steps of \cite{fujishige2005submodular}, consider an arbitrary vector $\pmb{R} \in \mathscr{R}_j$, and define $\delta = \min_i \; R_i$. Hence, for all $\Omega \neq \emptyset$, we have $\delta  \leq \sum_{i \in \Omega} R_i / \vert \Omega \vert \leq  g_{j}(\Omega) / \vert \Omega \vert $. Therefore, $\min_{\Omega \neq \emptyset} g_{j}(\Omega) / \vert \Omega \vert  $ is an upper bound on $\min_i \; R_i$. Choosing $\pmb{R} = (\pi_0, ..., \pi_0) \in \mathscr{R}_j$, where $\pi_0 = \min_{\Omega \neq \emptyset } g_{j}(\Omega) / \vert \Omega \vert $, the upper bound is thus achieved.
\end{proof}
Thus, the maximum symmetric rate (which also maximizes the minimum rate due to Lemma \ref{lemma:fair}) at BS $j$ is
\begin{equation}\label{eq:26:2}
R_{\textrm{Sym}}^{ j} = \min_{\Omega_j \subseteq \left\lbrace 1, 2, ..., L \right\rbrace, \Omega_j \neq \emptyset} \frac{ g_{j}(\Omega_j) }{  \vert \Omega_j \vert } .
\end{equation}

Finally, to find the maximum symmetric rate network-wide one needs to compute $\min R_{\textrm{Sym}}^{ j}$ for $j \in \left\lbrace 1, 2, ..., L \right\rbrace $. In the following, we discuss how \eqref{eq:26:2} can be solved over various regions.

\textbf{\textit{SD}}: At BS $j$, the minimization over the SD region in \eqref{eq:16} can be carried out by solving
\begin{align}
&\min_{\Omega_j}  \quad  \; \frac{ I (  \hat{y}_{ji} ; \; \pmb{x}_{\Omega_j} \; \vert \; \pmb{x}_{\Omega_j^c} ) }{   \vert \Omega_j \vert } \label{eq:27}  \\
&\text{subject to} \quad  \Omega_j \subseteq \lbrace 1, 2, ..., L \rbrace.  \label{eq:27:1}
\end{align}
The inequality in \eqref{eq:17:6} then allows one to find an achievable lower bound to the above problem by solving
\begin{align}
[\mathcal{P}1] \quad\quad &\min_{\Omega_j} \quad  \dfrac{1}{ \vert \Omega_j \vert} \;  \log \left( 1 + \dfrac{ M \sum_{l \in \Omega_j} \sqrt{\rho_{\textrm{p}}} \rho_{\textrm{u}}   \beta_{jil} \alpha_{jil} }{   \sum_{l=1}^L \sum_{k=1}^K \rho_{\textrm{u}} \beta_{jkl} +1 } \right)  \label{eq:28}  \\
&\text{subject to} \quad  \Omega_j \subseteq \lbrace 1, 2, ..., L \rbrace.  \label{eq:28:1}
\end{align}

\textbf{\textit{SND}:} It can be seen from \eqref{eq:24} that the region achieved by SND at BS $j$ can not in general be represented by the intersection of a finite number of half-spaces and thus does not fall in the category of convex polytopes. Hence, finding the maximum symmetric rate over this region in general does not appear to have a closed-form formulation as in \eqref{eq:26:2}. However, in order to provide insights into the benefits of using SND, below we investigate S-SND which provides a lower bound to SND. The special cases of two-cell and three-cell systems, for which analyzing SND is tractable, are also studied at the end of this section.

\textbf{\textit{S-SND}:} Under S-SND, it is solved at BS $j$
\begin{align}
[\mathcal{P}2] \quad\quad &\min_{\Omega_j} \quad  \dfrac{1}{ \vert \Omega_j \vert} \;  \log \left( 1 + \dfrac{ M \sum_{l \in \Omega_j} \sqrt{\rho_{\textrm{p}}} \rho_{\textrm{u}}   \beta_{jil} \alpha_{jil} }{   \sum_{l=1}^L \sum_{k=1}^K \rho_{\textrm{u}} \beta_{jkl} +1 } \right)  \label{eq:29}  \\
&\text{subject to} \quad  \lbrace j \rbrace \subseteq \Omega_j \subseteq \lbrace 1, 2, ..., L \rbrace.  \label{eq:29:1}
\end{align}

Note that even though $[\mathcal{P}1]$ and $[\mathcal{P}2]$ have the same objective function, following the discussion below \eqref{eq:25} the solution $\Omega_j$ of $[\mathcal{P}2]$ must include the index $j$ associated with the rate $R_{ij}$, and is thus not necessarily identical to that of $[\mathcal{P}1]$. 

To tackle $[\mathcal{P}1]$ (or $[\mathcal{P}2]$), we first consider two extreme regimes of high and low SINR.
\subsection{High SINR regime}
In this regime, the values of $M$ and $L$ are such that
\begin{equation}\label{eq:30}
\log \left( 1 + \dfrac{ M \sum_{l \in \Omega_j} \sqrt{\rho_{\textrm{p}}} \rho_{\textrm{u}}   \beta_{jil} \alpha_{jil} }{   \sum_{l=1}^L \sum_{k=1}^K \rho_{\textrm{u}} \beta_{jkl} +1 } \right) \simeq \log (M).
\end{equation}
For instance, this approximation holds when the number of BS antennas $M$ is truly large but finite while the number of cells $L$ is fixed.

Thus, in this regime the minimization in both $[\mathcal{P}1]$ and $[\mathcal{P}2]$ is achieved by $\Omega_j^{{\ast}}= \lbrace 1, 2, ..., L \rbrace$, and thereby the maximum symmetric rate at BS $j$ is given by
\begin{align}\label{eq:29:2}
 R_{\textrm{Sym}}^{\textrm{SD}, j} = R_{\textrm{Sym}}^{\textrm{S-SND}, j} =  \; \; \frac{ I \left(  \hat{y}_{ji} ; \; x_1 [i], x_2 [i], ..., x_L [i] \; \right) }{ L},  
\end{align} 
which scales as $\mathcal{O} (\log M)$. As discussed before, the performance of SND is at least as good as SD and S-SND, i.e., $R_{\textrm{Sym}}^{\textrm{SND}, j} \geq R_{\textrm{Sym}}^{\textrm{SD}, j} = R_{\textrm{Sym}}^{\textrm{S-SND}, j}$. Thus, in the high SINR regime the maximum symmetric rate of SND occurs on one of it's region boundaries that scales as $\mathcal{O} (\log M)$. In other words, from \eqref{eq:24} the maximum symmetric rate achieved by SND in the high SINR regime belongs to the full MAC, i.e., $R_{\textrm{Sym}}^{\textrm{SND}, j} \in \mathcal{R}_{\textrm{MAC}(\lbrace 1, ..., L \rbrace,j)}^i$. Therefore, in the high SINR regime one can upper bound $R_{\textrm{Sym}}^{\textrm{SND}, j}$ by $R_{\textrm{Sym}}^{\textrm{SND}, j} \leq \dfrac{1}{L} \; I \left(  \hat{y}_{ji}  ; \; x_1 [i], x_2 [i], ..., x_L [i] \; \right)$. Consequently, we obtain for the high SINR regime $R_{\textrm{Sym}}^{\textrm{SND}, j} = R_{\textrm{Sym}}^{\textrm{SD}, j} = R_{\textrm{Sym}}^{\textrm{S-SND}, j}$. To find the allocation network-wide, denoted by $R_{\textrm{Sym}}$, one needs to calculate the smallest value of \eqref{eq:29:2} across all cells, i.e., 
\begin{align}
R_{\textrm{Sym}} =  \; \min_j \; \frac{ I \left(  \hat{y}_{ji}  ; \; x_1 [i], x_2 [i], ..., x_L [i] \; \right) }{   L} , \label{eq:29:3}
\end{align}
which is the same for all interference decoding schemes. Therefore, compared to TIN we obtain 
\begin{align}\label{eq:29:2:1}
R_{\textrm{Sym}} > \min_j I \left(  \hat{y}_{ji}  ; \; x_j [i] \; \right) = R_{\textrm{Sym}}^{\textrm{TIN}},
\end{align}
i.e., joint decoding of all signals $\lbrace x_l [i] \rbrace_{l=1}^L$ performs strictly better than decoding only the desired signal (e.g., $x_j [i]$ at BS $j$) while treating the interference signals (e.g., $\lbrace x_l [i] \rbrace_{l=1, l \neq j}^L$ at BS $j$) as noise (TIN). 
\begin{remark}\label{remark:ref}
\normalfont Similar to the results of \cite{adhikary2017uplink ,van2018large}, it is apparent from \eqref{eq:29:3} that, for sufficiently large $M$, the proposed interference decoding schemes of this paper are also able to achieve rates that scale as $\mathcal{O} ( \log M)$. Note that the achievable rates of \cite{adhikary2017uplink ,van2018large} are higher than the ones reported in this paper due to an extra array processing at a centralized network controller, which results in a larger pre-log factor (e.g., $1$ vs $1/L$ in the high SINR regime). However, such a centralized processing requires extra resources and hardware infrastructure to facilitate the BS cooperation at the network controller. In contrast, in this paper, all processing are performed locally at BSs without needing any cooperation. 
\end{remark}
\begin{remark}\label{remark:different:pilots}
\normalfont Consider the alternative approach of using different pilots in different cells, as explained before \eqref{eq:14:14}. One should note that, for the regime of large but finite $M$, decoding all $K (L-1)+1$ number of interfering users at the current BS will generally produce a smaller symmetric rate than the approach of Section \ref{sec:csi} which only decodes $L$ users, due to the much smaller pre-log factor in the former case. For instance, in the regime of high SINR, using \eqref{eq:30}-\eqref{eq:29:3}, the achieved maximum symmetric rate of the former case is $\approx 1/(K (L-1)+1) \log (M)$, whereas that of the latter case is $\approx 1/L \log (M)$. Hence, when decoding the PC interference, re-using orthonormal pilots cross all cells is preferred as, for finite $M$, it results in larger symmetric rate across the network.        
\end{remark}
\textbf{Observation:} In the high SINR regime, regardless of the cells geometry, all interference decoding schemes SD/SND/S-SND have identical performance and also strictly outperform TIN. In turn, one may choose to implement SD which has a simpler decoder.\vspace{-3mm}
\subsection{Low SINR regime} 
In this regime, the values of $M$ and $L$ are such that
\begin{equation}\label{eq:31}
\log \left( 1 + \dfrac{ M \sum_{l \in \Omega_j} \sqrt{\rho_{\textrm{p}}} \rho_{\textrm{u}}   \beta_{jil} \alpha_{jil} }{   \sum_{l=1}^L \sum_{k=1}^K \rho_{\textrm{u}} \beta_{jkl} +1 } \right) \simeq  \dfrac{ M \sum_{l \in \Omega_j} \sqrt{\rho_{\textrm{p}}} \rho_{\textrm{u}}   \beta_{jil} \alpha_{jil} }{   \sum_{l=1}^L \sum_{k=1}^K \rho_{\textrm{u}} \beta_{jkl} +1 } .
\end{equation}
For instance, this approximation holds when $M$ and $L$ are small such that the product of the number of BS antennas $M$ and the ratio $\frac{ \sum_{l \in \Omega_j} \sqrt{\rho_{\textrm{p}}} \rho_{\textrm{u}}   \beta_{jil} \alpha_{jil} }{   \sum_{l=1}^L \sum_{k=1}^K \rho_{\textrm{u}} \beta_{jkl} +1 }$ becomes small.

Since $\alpha_{jil} = \frac{\sqrt{\rho_{\rm p}} \beta_{jil}}{1+\rho_{\rm p} \sum_{l_1=1}^L \beta_{jil_1}}$, provided that \eqref{eq:31} holds one can see that $[\mathcal{P}1]$ (or $[\mathcal{P}2]$) has the same minimizer $\Omega_j^{\ast}$ as the problem below
\begin{align}
[\mathcal{P}3] \quad\quad \min_{\Omega_j}  \quad   \frac{ \sum_{l \in \Omega_j} \nolimits \beta_{jil}^2 }{ \vert \Omega_j \vert } ,  \label{eq:32}
\end{align}
subject to \eqref{eq:28:1}  for SD (or \eqref{eq:29:1} for S-SND). To solve this problem, we first construct the sorted vector
\begin{equation}\label{eq:32:1}
\pmb{\pi}_{ji} = \left[ \beta_{ji(1)}^2, \beta_{ji(2)}^2, ..., \beta_{ji(L)}^2 \right]^T,
\end{equation}
whose entries are $\beta_{ji1}, \beta_{ji2}, ..., \beta_{jiL}$ sorted in non-decreasing order, i.e., $\beta_{ji(1)}^2 \leq \beta_{ji(2)}^2 \leq ... \leq \beta_{ji(L)}^2$. Thus, considering a simple distance-based pathloss model for large-scale fading coefficients and assuming that users are associated to the nearest BS, $\beta_{ji(1)}$ and $\beta_{ji(L)}$ correspond to the furthest and closest users to BS $j$, respectively and $\beta_{ji(L)} = \beta_{jij}$. Further note that the objective function of $[\mathcal{P}3]$ is an averaging operation over a subset of the entries of \eqref{eq:32:1}. We now study the solution of $[\mathcal{P}3]$ under SD and S-SND in the following.

\textbf{\textit{SD}:} Subject to \eqref{eq:28:1}, the average over any subset of the entries of \eqref{eq:32:1} is always at least as large as the smallest entry, $\beta_{ji(1)}^2$, with equality when the average is only over the smallest entry. Therefore, $\Omega_j^{ {\ast}}$ is the index of the BS located at the furthest distance from the BS $j$. In other words, in the low SINR regime the performance of SD at BS $j$ is (unsurprisingly) limited by the rate of the furthest user from BS $j$, i.e., $R_{\textrm{Sym}}^{\textrm{SD}, j} = I (  \hat{y}_{ji} ; \; x_{\Omega_j^{\ast}} [i] \; \big\vert \; \pmb{x}_{S \setminus \Omega_j^{\ast}} )$

\textbf{\textit{S-SND}:} First, note that since $\Omega_j^{{\ast}}$ for S-SND at BS $j$ contains index $ j$, the averaging operation in $[\mathcal{P}3]$ must include the largest entry of \eqref{eq:32:1}, i.e., $\beta_{ji(L)}^2 = \beta_{jij}^2$. Moreover, by including any other entry of \eqref{eq:32:1} to the averaging operation, its value decreases. Therefore, to find $\Omega_j^{{\ast}}$ one should start with the initial value $b^1 = ( \beta_{ji(L)}^2 + \beta_{ji(1)}^2)/2$, and then repeatedly add the next smallest entry of the vector $\pmb{\pi}_{ji}$ to the averaging operation. This process terminates when the average becomes larger than its value from the previous iteration. The following algorithm computes $\Omega_j^{ {\ast}}$ for S-SND.

\begin{algorithm}\label{algorithm:2}
Set $q=1$, $b^q = ( \beta_{ji(L)}^2 + \beta_{ji(1)}^2)/2$.
\begin{description}
\item[{(1)}] Set $q=q+1$, and compute $b^q = \dfrac{\beta_{ji(L)}^2 + \sum_{l=1, l \neq L}^q \beta_{ji(l)}^2}{q+1}$. 
\item[{(2)}] If $ b^q \geq b^{q-1} $, then stop and output $\Omega_j^{{\ast}} = \lbrace j \rbrace \cup \left\lbrace l: \; \beta_{jil}^2 \in \lbrace \pmb{\pi}_{ji} [1:q-1] \rbrace \right\rbrace $, where $\pmb{\pi}_{ji} [1:q-1]$ denotes the the first $q-1$ entries of $\pmb{\pi}_{ji}$. Otherwise, go to step 1.
\end{description}
\end{algorithm}
Therefore, in the low SINR regime, by construction the allocation 
\begin{equation}\label{eq:33}
R_{\textrm{Sym}}^{\textrm{S-SND}, j} =  \quad \frac{ I (  \hat{y}_{ji} ; \; \pmb{x}_{\Omega_j^{ {\ast}}}  \; \vert \; \pmb{x}_{S \setminus \Omega_j^{ {\ast}}}  ) }{ \vert \Omega_j^{\ast} \vert} , 
\end{equation}
where $\Omega_j^{\ast} \subseteq S=\lbrace 1, 2, ..., L \rbrace$, is strictly larger than that of SD, i.e., in the low SINR regime S-SND strictly outperforms SD. Even though the performance of SND in the low SINR regime appears intractable, as previously pointed out S-SND provides a lower bound to SND. Hence, network-wide we have $R_{\textrm{Sym}}^{\textrm{SND}} \geq R_{\textrm{Sym}}^{\textrm{S-SND}} > R_{\textrm{Sym}}^{\textrm{SD}}$.
\subsection{General SINR}
Now, consider the problem of determining the maximum symmetric rate of SD ($[\mathcal{P}1]$) or S-SND ($[\mathcal{P}2]$) in general, where approximations of high and low SINR are no longer assumed. In Appendix~C, efficient methods to numerically compute the maximum symmetric rate of SD and S-SND are presented.

Since it is difficult to comment on the performance of maximum symmetric rate for SND in general due to the structure of the SND region, we next study two special cases of two-cell and three-cell systems which are analytically tractable. For the two-cell system, we find conditions under which either TIN/SND is optimal or interference decoding schemes SD/SND/S-SND are all optimal. Whereas, for the three-cell system we will briefly illustrate examples where SND outperforms \textit{all the other schemes}.
\subsubsection{Two-cell system}\label{subsection:two:cell}
We now consider a cellular system consisting of only two cells, and denote the indices of the cells by $j=1, 2$. Associated with the $i^{\rm th}$ user, $i=1, 2, ..., K$, the rate regions achieved at BS 1 are given as below.

\textbf{\textit{SD}:} From \eqref{eq:16} we obtain
\begin{align}
R_{i1} &\leq  I \left(  \hat{y}_{1i} ; \; x_1 [i] \; \big\vert \; x_2 [i] \right) \label{eq:37} \\ 
R_{i2} &\leq  I \left(  \hat{y}_{1i} ; \; x_2 [i] \; \big\vert \; x_1 [i] \right) \label{eq:38} \\
R_{i1} + R_{i2} &\leq  I \left(  \hat{y}_{1i} ; \; x_1 [i], \; x_2 [i] \; \right). \label{eq:39}
\end{align}
\textbf{\textit{SND}:} From \eqref{eq:24} we obtain
\begin{align}
R_{i1} &\leq  I \left(  \hat{y}_{1i} ; \; x_1 [i] \; \big\vert \; x_2 [i] \right) \label{eq:40} \\
R_{i1} + \min \left\lbrace R_{i2} , \; I \left(  \hat{y}_{1i} ; \; x_2 [i] \; \big\vert \; x_1 [i] \right) \right\rbrace &\leq  I \left(  \hat{y}_{1i} ; \; x_1 [i], \; x_2 [i] \; \right) \label{eq:41} .
\end{align}
\textbf{\textit{S-SND}:} From \eqref{eq:25} we obtain 
\begin{align}
R_{i1} &\leq  I \left(  \hat{y}_{1i} ; \; x_1 [i] \; \big\vert \; x_2 [i] \right) \label{eq:42} \\
R_{i1} + R_{i2} &\leq I \left(  \hat{y}_{1i} ; \; x_1 [i], \; x_2 [i] \; \right) \label{eq:43}.
\end{align}    
\begin{remark}\label{remark:bs}
\normalfont One can similarly obtain the rate regions at BS 2 by replacing $\hat{y}_{1i}$ with $\hat{y}_{2i}$ and swapping appropriate indices in \eqref{eq:37}-\eqref{eq:43}.
\end{remark}
An interesting observation for a two-cell system is that the region for SND is the union of the SD/S-SND region and the TIN region. We now aim to investigate the performance of different schemes with maximum symmetric allocation. For the two-cell system, we first define the following cases:

\textbf{\textit{Case (i)}:} In this case, we have
\begin{equation}\label{eq:48}
I \left(  \hat{y}_{1i} ; \; x_2 [i] \; \big\vert x_1 [i] \; \right) < I \left(  \hat{y}_{1i} ; \; x_1 [i] \; \right).
\end{equation}

\textbf{\textit{Case (ii)}:} In this case, we have
\begin{align}
\nonumber &\dfrac{1}{2} I \left(  \hat{y}_{1i} ; \; x_1 [i], \; x_2 [i] \; \right) \\
 &\hspace{2mm}\leq \min \left\lbrace I \left(  \hat{y}_{1i} ; \; x_1 [i] \; \big\vert \; x_2 [i] \right), \; I \left(  \hat{y}_{1i} ; \; x_2 [i] \; \big\vert \; x_1 [i] \right) \right\rbrace. \label{eq:44}
\end{align}

\textbf{\textit{Case (iii)}:} In this case, we have
\begin{equation}\label{eq:48:1}
I \left(  \hat{y}_{1i} ; \; x_1 [i] \; \big\vert x_2 [i] \; \right) < I \left(  \hat{y}_{1i} ; \; x_2 [i] \; \right).
\end{equation}

From the perspective of the maximum symmetric rate, cases (i)-(iii) refer to conditions (in terms of mutual information) under which the diagonal $R_{i2} = R_{i1}$ intersects a particular facet of the rate region. 
\begin{figure*}[t] 
	\centering 
	\captionsetup[subfloat]{captionskip=0mm}
	\scriptsize 
	\subfloat[]{\def\svgwidth{140pt}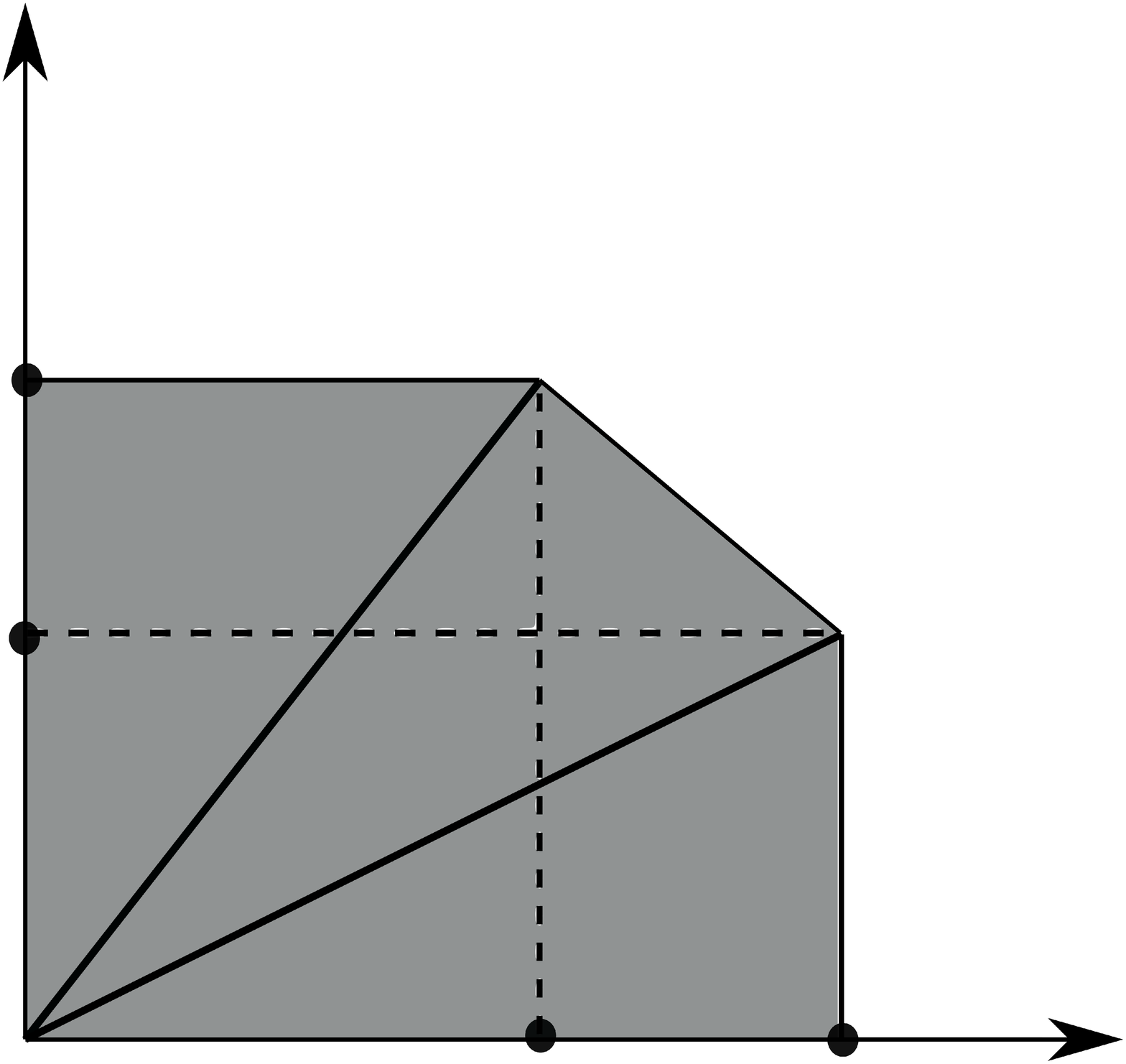\label{fig:cases:a}} \hspace{10mm}\vspace{0mm} 
	\subfloat[]{\def\svgwidth{125pt}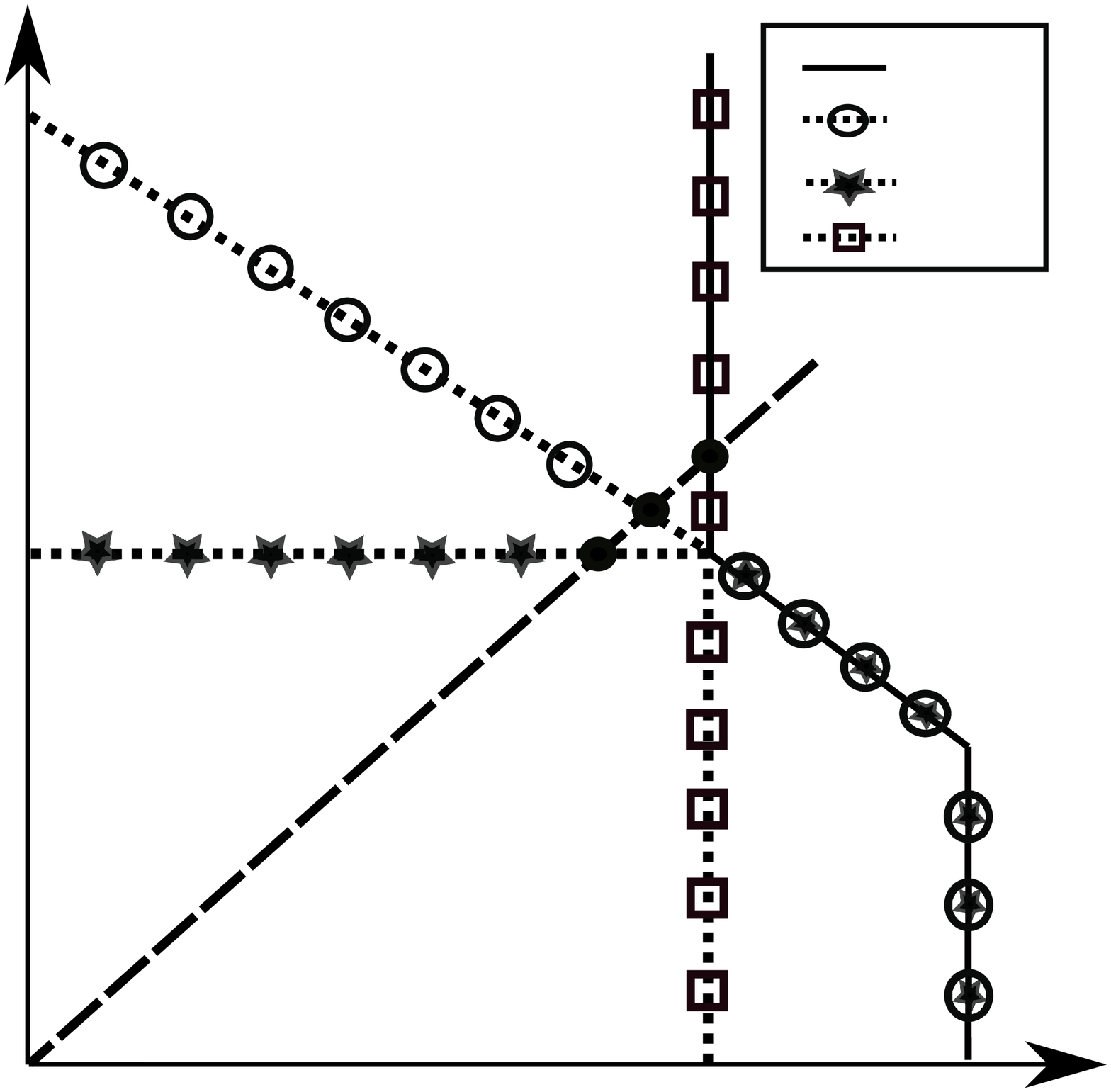\label{fig:cases:1}}\hspace{15mm}
	\subfloat[]{\def\svgwidth{120pt}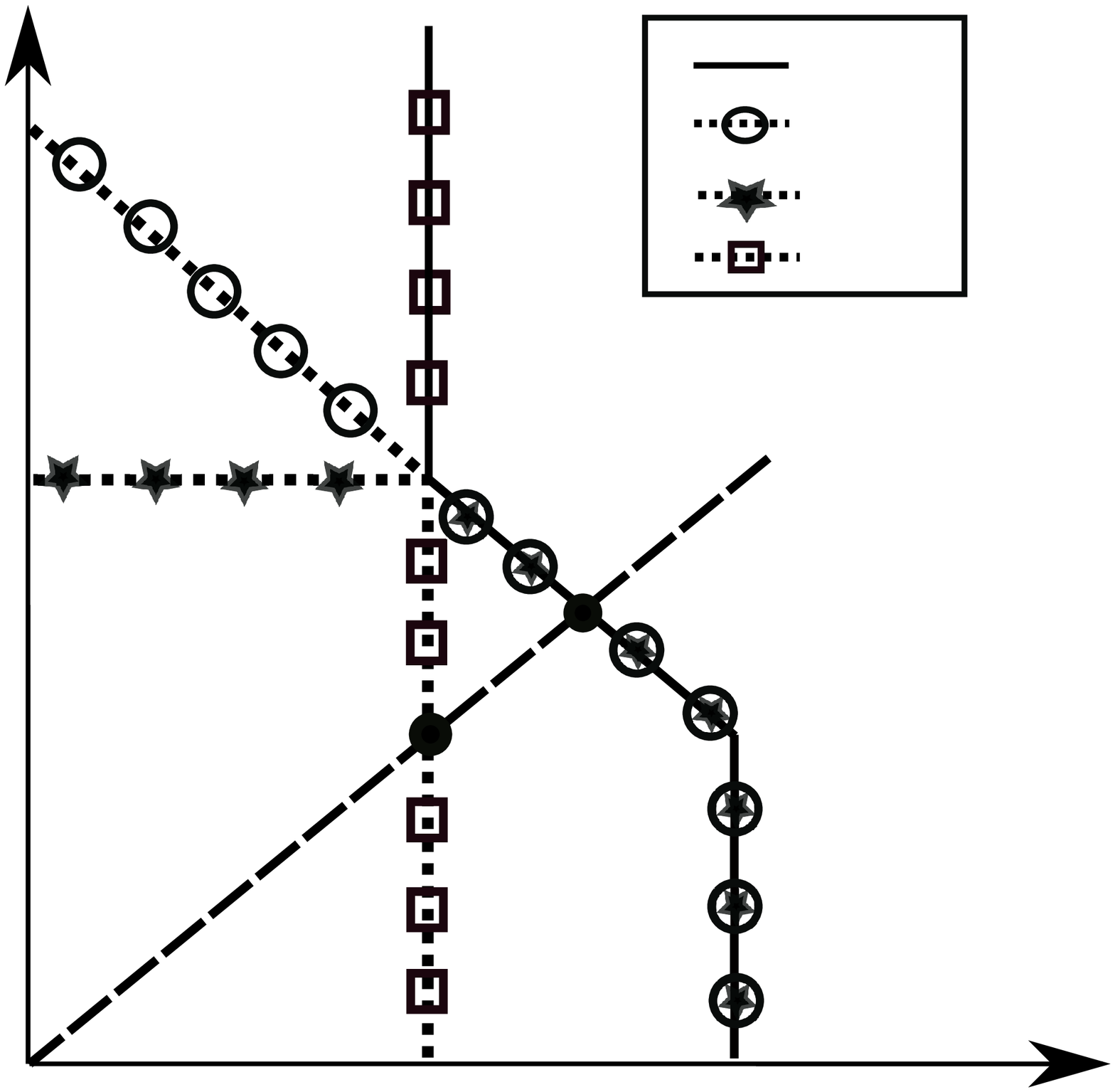\label{fig:cases:2}}
	\caption{\small (a) Illustration of the SD region at BS 1 representing the 3 sub-regions $G_1$, $G_2$ and $G_3$ over which the diagonal $R_{i2}=R_{i1}$ will intersect a particular facet of the rate region, (b) Illustration of the rate regions achieved under TIN/SND/S-SND/SD at BS 1 for case (i): the diagonal $R_{i2}=R_{i1}$ intersects SD at point E, S-SND at point F, and SND/TIN at point G, resulting in \eqref{eq:61:1}, (c) Illustration of the rate regions achieved under TIN/SND/S-SND/SD at BS 1 for case (ii): the diagonal $R_{i2}=R_{i1}$ intersects TIN at point H, and SND/S-SND/SD at point I, resulting in \eqref{eq:47:1}.\label{fig:cases}}
	\end{figure*}
	
 More specifically, consider the rate region achieved by SD at BS 1 depicted in Fig.~\ref{fig:cases:a}, where the entire region is divided into 3 sub-regions $G_1, G_2$ and $G_3$. Also, from \eqref{eq:37}-\eqref{eq:39}, note that the corner points are given by $(C, \; A) = \left( I \left(  \hat{y}_{1i} ; \; x_1 [i] \; \right), \; I \left(  \hat{y}_{1i} ; \; x_2 [i] \; \big\vert x_1 [i] \; \right) \right)$ and $(D, \; B) = \left( I \left(  \hat{y}_{1i} ; \; x_1 [i] \; \big\vert x_2 [i] \; \right), \; I \left(  \hat{y}_{1i} ; \; x_2 [i] \;  \right)  \right)$. Now, the conditions under which the diagonal $R_{i2} = R_{i1}$ lies in sub-regions $G_1, G_2$ or $G_3$, are equivalent to the conditions of the three cases of \eqref{eq:48}-\eqref{eq:48:1} as follows: the diagonal $R_{i2} = R_{i1}$ lies in $G_1$, \ied case (i) is true, iff $C > A$; the diagonal $R_{i2} = R_{i1}$ lies in $G_2$, \ied case (ii) is true, iff   $C \leq A$ and $B \leq D$; the diagonal $R_{i2} = R_{i1}$ lies in $G_3$, \ied case (iii) is true, iff  $B > D$.   Specifically, the conditions for case (i) in \eqref{eq:48} and case (iii) in \eqref{eq:48:1} are exactly those given by $C > A$ and $B > D$, respectively. 

For case (ii), note that one can also write 
 \begin{align}
I (  \hat{y}_{1i} ; \; x_1 [i], \; x_2 [i] \; ) &= I (  \hat{y}_{1i} ; \; x_1 [i] \; ) + I (  \hat{y}_{1i} ; \; x_2 [i] \; \vert \; x_1 [i] \; ) \label{eq:mutual:1} \\
&= I (  \hat{y}_{1i} ; \; x_2 [i] \; ) + I (  \hat{y}_{1i} ; \; x_1 [i] \; \vert \; x_2 [i] \; ). \label{eq:mutual:2}
\end{align}
Hence, in case (ii) where we have $C \leq A$ and $B \leq D$, by replacing $C$ and $B$ with their respective identity from \eqref{eq:mutual:1} and \eqref{eq:mutual:2}, we reach the following conditions 
\begin{align}
\frac{1}{2} I (  \hat{y}_{1i} ; \; x_1 [i], \; x_2 [i] \; ) &\leq I (  \hat{y}_{1i} ; \; x_2 [i] \; \vert \; x_1 [i] \; ) \label{eq:cond:1} \\
\frac{1}{2} I (  \hat{y}_{1i} ; \; x_1 [i], \; x_2 [i] \; ) &\leq I (  \hat{y}_{1i} ; \; x_1 [i] \; \vert \; x_2 [i] \; ) \label{eq:cond:2},
\end{align}
resulting in \eqref{eq:44}. 
\begin{remark}
\normalfont If the worst-case uncorrelated noise bound in \eqref{eq:17:6} is substituted for the mutual information expressions in \eqref{eq:48}-\eqref{eq:48:1}, case (iii) can never happen as the effects of small-scale fading vanish in \eqref{eq:17:6} and thus the received power of $x_2 [i]$ at BS 1 can not be larger than that of $x_1 [i]$ at BS 1. Hence, case (i) and case (ii) can be viewed as two complimentary and exhaustive conditions for a two-cell system at BS 1.

Note that the bounds of \eqref{eq:17:6} differ from the mutual expressions in \eqref{eq:48:1} due to two factors: (a) the expressions in \eqref{eq:48:1} depend on the specific fading gains, and (b) the effective noise is not necessarily Gaussian. However, in the limit of large $M$ the channel hardening of \eqref{eq:9} minimizes the effects of (a). Moreover, due to the channel hardening of \eqref{eq:9} as well as the assumption of Gaussian signaling in Theorem \ref{lemma:2}, the interference terms (effective noise) in \eqref{eq:8:1} are asymptotically Gaussian.

\end{remark}
The performance comparison of various schemes at BS 1 is summarized in the following corollary. 
\begin{corollary}\label{cor:1}
If the condition of case (i) in \eqref{eq:48} holds at BS 1, then
\begin{equation}\label{eq:61:1} 
R_{\textrm{Sym}}^{\textrm{SD},1} <  R_{\textrm{Sym}}^{\textrm{S-SND},1} < R_{\textrm{Sym}}^{\textrm{SND},1} = R_{\textrm{Sym}}^{\textrm{TIN},1},
\end{equation}
otherwise, if the condition of case (ii) in \eqref{eq:44} holds at BS 1, then
\begin{equation}\label{eq:47:1}
R_{\textrm{Sym}}^{\textrm{TIN},1} \leq R_{\textrm{Sym}}^{\textrm{SD},1} = R_{\textrm{Sym}}^{\textrm{SND},1} = R_{\textrm{Sym}}^{\textrm{S-SND},1},
\end{equation}
with strict equality in \eqref{eq:47:1} if and only if \eqref{eq:44} holds with strict equality.
\end{corollary}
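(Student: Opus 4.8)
The plan is to evaluate each of the four maximum symmetric rates at BS~$1$ in closed form via Lemma~\ref{lemma:fair}, and then compare them under the hypotheses of cases~(i) and~(ii). To lighten notation, write $I_{1}\eqdef I(\hat y_{1i};x_1[i])$, $I_{2}\eqdef I(\hat y_{1i};x_2[i])$, $I_{1|2}\eqdef I(\hat y_{1i};x_1[i]\mid x_2[i])$, $I_{2|1}\eqdef I(\hat y_{1i};x_2[i]\mid x_1[i])$ and $I_{12}\eqdef I(\hat y_{1i};x_1[i],x_2[i])$; in the notation of Fig.~\ref{fig:cases:a} these are $C$, $B$, $D$, $A$, and the sum-rate bound of \eqref{eq:39}, respectively. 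The chain rule gives $I_{12}=I_1+I_{2|1}=I_2+I_{1|2}$, and, since $x_1[i]$ and $x_2[i]$ are independent and conditioning cannot increase entropy, $I_{2|1}\ge I_2$ and $I_{1|2}\ge I_1$ (the facts $A\ge B$, $D\ge C$). In this notation, case~(i) in \eqref{eq:48} reads $I_{2|1}<I_1$, and case~(ii) in \eqref{eq:44} reads $\tfrac12 I_{12}\le\min\{I_{1|2},I_{2|1}\}$.

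First I would compute the four rates. For SD, \eqref{eq:37}--\eqref{eq:39} is the polytope $\mathscr R_1$ with $g_1(\{1\})=I_{1|2}$, $g_1(\{2\})=I_{2|1}$, $g_1(\{1,2\})=I_{12}$, so Lemma~\ref{lemma:fair} gives $R_{\textrm{Sym}}^{\textrm{SD},1}=\min\{I_{1|2},\,I_{2|1},\,\tfrac12 I_{12}\}$. For S-SND one drops the facet with $1\notin\Omega$, i.e.\ takes $g_1(\{2\})=\infty$, so $R_{\textrm{Sym}}^{\textrm{S-SND},1}=\min\{I_{1|2},\,\tfrac12 I_{12}\}$. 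For TIN, \eqref{eq:12:1} makes the region at BS~$1$ the half-plane $\{R_{i1}\le I_1\}$, so $R_{\textrm{Sym}}^{\textrm{TIN},1}=I_1$. For SND I would invoke the observation already noted in the text (and checkable directly from \eqref{eq:40}--\eqref{eq:41} by splitting on whether $R_{i2}\le I_{2|1}$) that the SND region at BS~$1$ equals $\mathscr R_1\cup\{R_{i1}\le I_1\}$, i.e.\ the union of the SD and TIN regions; since both pieces are convex and contain the origin, each meets the diagonal $R_{i1}=R_{i2}$ in an interval of the form $[0,\cdot\,]$, so maximizing the symmetric rate over the union distributes, giving $R_{\textrm{Sym}}^{\textrm{SND},1}=\max\{I_1,\,R_{\textrm{Sym}}^{\textrm{SD},1}\}$.

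Next I would specialize. In case~(i), $I_{2|1}<I_1$ gives $\tfrac12 I_{12}=\tfrac12(I_1+I_{2|1})>I_{2|1}$ and $I_{1|2}\ge I_1>I_{2|1}$, so $R_{\textrm{Sym}}^{\textrm{SD},1}=I_{2|1}$; also $\tfrac12 I_{12}<I_1\le I_{1|2}$, so $R_{\textrm{Sym}}^{\textrm{S-SND},1}=\tfrac12 I_{12}$; and $R_{\textrm{Sym}}^{\textrm{SND},1}=\max\{I_1,I_{2|1}\}=I_1=R_{\textrm{Sym}}^{\textrm{TIN},1}$. The (strict) chain $I_{2|1}<\tfrac12 I_{12}<I_1$ is exactly \eqref{eq:61:1}. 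In case~(ii), $\tfrac12 I_{12}\le\min\{I_{1|2},I_{2|1}\}$ makes $\tfrac12 I_{12}$ the smallest of $\{I_{1|2},I_{2|1},\tfrac12 I_{12}\}$, so $R_{\textrm{Sym}}^{\textrm{SD},1}=R_{\textrm{Sym}}^{\textrm{S-SND},1}=\tfrac12 I_{12}$; moreover $I_1=I_{12}-I_{2|1}\le\tfrac12 I_{12}$, so $R_{\textrm{Sym}}^{\textrm{TIN},1}=I_1\le\tfrac12 I_{12}$ and $R_{\textrm{Sym}}^{\textrm{SND},1}=\max\{I_1,\tfrac12 I_{12}\}=\tfrac12 I_{12}$, which is \eqref{eq:47:1}. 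For the equality clause, equality throughout \eqref{eq:47:1} holds iff $R_{\textrm{Sym}}^{\textrm{TIN},1}=\tfrac12 I_{12}$, i.e.\ iff $I_1=\tfrac12 I_{12}$, i.e.\ (by the chain rule) iff $I_1=I_{2|1}$; and, using $\min\{I_{1|2},I_{2|1}\}=I_{2|1}$ (valid whenever $I_1\ge I_2$, which in particular rules out case~(iii) and holds automatically for the worst-case bounds \eqref{eq:17:6}, cf.\ the Remark preceding the corollary), the condition $I_1=I_{2|1}$ is equivalent to $\tfrac12 I_{12}=\min\{I_{1|2},I_{2|1}\}$, i.e.\ to \eqref{eq:44} with equality.

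The algebra above is routine; the step requiring care is the identification of the SND region at BS~$1$ with $\mathscr R_1\cup\{R_{i1}\le I_1\}$ together with the resulting reduction $R_{\textrm{Sym}}^{\textrm{SND},1}=\max\{I_1,R_{\textrm{Sym}}^{\textrm{SD},1}\}$ --- in particular, justifying that the maximum symmetric rate over a union equals the maximum of the two, which relies on each region being star-shaped about the origin along the diagonal (a consequence of convexity and origin membership). A secondary subtlety is the equality clause of \eqref{eq:47:1}: the clean ``if and only if'' uses $I_{1|2}\ge I_{2|1}$ (equivalently $I_1\ge I_2$, the desired signal being received no weaker than the interferer at BS~$1$), which is precisely the hypothesis that excludes case~(iii); I would state this dependence explicitly, or equivalently phrase the two-cell comparison in terms of the worst-case bounds \eqref{eq:17:6} of Theorem~\ref{lemma:2}, for which small-scale fading vanishes and this inequality holds automatically.
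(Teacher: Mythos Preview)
Your proposal is correct and follows essentially the same route as the paper's proof in Appendix~D: compute each of the four symmetric rates explicitly via the region descriptions \eqref{eq:37}--\eqref{eq:43} (you invoke Lemma~\ref{lemma:fair} for this, the paper simply states the values), then use the chain rule $I_{12}=I_1+I_{2|1}=I_2+I_{1|2}$ to compare. Your write-up is in fact somewhat more explicit than the paper's on two points: (a) you spell out why $R_{\textrm{Sym}}^{\textrm{SND},1}=\max\{R_{\textrm{Sym}}^{\textrm{TIN},1},R_{\textrm{Sym}}^{\textrm{SD},1}\}$ via the observation that the two-cell SND region is the union of the SD and TIN regions (stated in the text before the corollary but not invoked in Appendix~D), and (b) you treat the ``if and only if'' equality clause and correctly flag that it relies on $I_1\ge I_2$ (equivalently, exclusion of case~(iii)), which the paper's Remark preceding the corollary justifies via the worst-case bound \eqref{eq:17:6}.
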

\begin{proof}
See Appendix D.
\end{proof}
Fig. \ref{fig:cases} illustrates an example of this corollary. Sub-figure (a) represents case (i) and its consequence in \eqref{eq:61:1}, whereas sub-figure (b) represents case (ii) and its consequence in \eqref{eq:47:1}.

To comment on the performance of various schemes over both cells, we consider a symmetric setting which is easy to analyze, and provides insights into the benefits of employing interference decoding schemes.  

We define the symmetric setting as a scenario, where the MACs at both BS 1 and 2 are identical. Therefore, if case (i) is active at BS 1, it is also active at BS 2, and the resulting rates are equal at both BSs. Following Remark \ref{remark:bs} it is thus obtained network-wide that
\begin{equation}\label{eq:47:1:1}
R_{\textrm{Sym}}^{\textrm{SD}} <  R_{\textrm{Sym}}^{\textrm{S-SND}} < R_{\textrm{Sym}}^{\textrm{SND}} = R_{\textrm{Sym}}^{\textrm{TIN}}.
\end{equation}
\textbf{Observation:} Both SND and TIN achieve the same performance and strictly outperform SD and S-SND. Thus, TIN may be the better choice of strategy in practice due to its simplicity.

Similarly, if case (ii) is active with strict inequality at BS 1, it is also active with strict inequality at BS 2, and the resulting rates are equal at both BSs. Following Remark \ref{remark:bs} it is thus obtained network-wide that
\begin{equation}\label{eq:47:1:2}
R_{\textrm{Sym}}^{\textrm{TIN}} < R_{\textrm{Sym}}^{\textrm{SD}} = R_{\textrm{Sym}}^{\textrm{SND}} = R_{\textrm{Sym}}^{\textrm{S-SND}}.
\end{equation}
\textbf{Observation:} The interference decoding schemes SD/SND/S-SND achieve the same performance and strictly outperform TIN. Thus, SD (joint decoding of both users) may be the simplest one to implement in practice.
Practical examples of these cases will be demonstrated in the numerical results section. 

Consider, for instance, a setup where all users are located at a distance $x$ from the corresponding BS as in Fig. \ref{fig:two:circle}. With respect to the lower bound in \eqref{eq:17:6}, this setup is symmetric as the effects of small-scale fading vanish in \eqref{eq:17:6}.
\begin{figure}[t!]
\centering
\def\svgwidth{260pt} 
\footnotesize
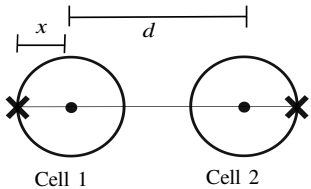
\caption{\small An example of symmetric geometry with circular cells at a fixed distance of $d$ from each other, where users are located at a distance $x$ from the their BSs located at the center of the cells. The position of users is denoted by '$\times$'.}\label{fig:two:circle}
\end{figure} 
Note that in a two-cell system, irrespective of whether the setting is symmetric or not, under no circumstances does SND strictly outperform all the other schemes. We next illustrate scenarios for a three-cell system, where SND can strictly outperform \textit{all the other schemes}.

\subsubsection{Three-cell system}\label{subsection:three:cell}
Now, consider a cellular system consisting of only three cells, where the indices of the cells are denoted by $j=1, 2, 3$. In this case, the rate regions under SD/S-SND can be obtained by a straightforward extension of \eqref{eq:37}-\eqref{eq:39} and \eqref{eq:42}-\eqref{eq:43} to the three-cell system, thus omitted for brevity. Moreover for SND, the rate region associated with the $i^{\rm th}$ user, $i=1, 2, ..., K$, at BS 1 can be found using \eqref{eq:24} as follows:
\begin{align}\label{eq:62}
R_{i1} &\leq  I \left(  \hat{y}_{1i} ; \; x_1 [i] \; \big\vert \; x_2 [i], x_3 [i]   \; \right) \\
\nonumber R_{i1} &+ \min  \left\lbrace I \left(  \hat{y}_{1i} ; \; x_2 [i] \; \big\vert \; x_1 [i], x_3 [i]   \; \right), \; R_{i2} \right\rbrace \\
 &\leq  I \left(  \hat{y}_{1i} ; \; x_1 [i], x_2 [i] \; \big\vert \; x_3 [i] \; \right) \label{eq:63}\\
\nonumber R_{i1} &+ \min  \left\lbrace I \left(  \hat{y}_{1i} ; \; x_3 [i] \; \big\vert \; x_1 [i], x_2 [i]   \; \right), \; R_{i3} \right\rbrace \\
  &\leq  I \left(  \hat{y}_{1i} ; \; x_1 [i], x_3 [i] \; \big\vert \; x_2 [i] \; \right)  \label{eq:64}\\
\nonumber R_{i1} &+ \min  \big\lbrace I \left(  \hat{y}_{1i} ; \; x_2 [i], x_3 [i] \; \big\vert \; x_1 [i] \; \right), \\
 \nonumber &\hspace{11mm}R_{i2} + I \left(  \hat{y}_{1i} ; \; x_3 [i] \; \big\vert \; x_1 [i], x_2 [i]   \; \right),  \\
\nonumber &\hspace{11mm}R_{i3} + I \left(  \hat{y}_{1i} ; \; x_2 [i] \; \big\vert \; x_1 [i], x_3 [i]   \; \right), \\
\nonumber &\hspace{11mm}R_{i2} + R_{i3} \big\rbrace  \\
 &\leq  I \left(  \hat{y}_{1i} ; \; x_1 [i], x_2 [i], x_3 [i]   \; \right). \label{eq:65}
\end{align}

An example of this region is plotted in Fig. \ref{fig:3}, where the dashed lines indicate that the region at BS 1 is unbounded in variables $R_{i2}$ and $R_{i3} $, which is in agreement with property [P1] of the achievable region. Also, note that following Remark \ref{remark:bs} the regions corresponding to BSs 2 and 3 can be similarly found.
\begin{figure}[t!]
\centering
\includegraphics[width=7cm]{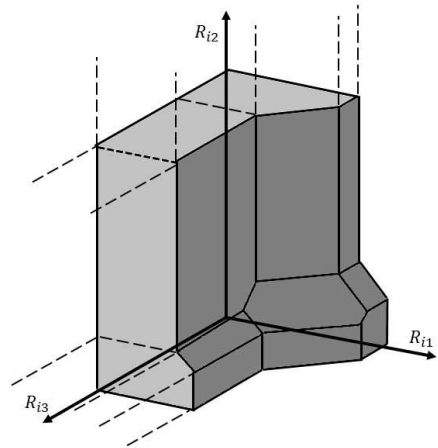}
\caption{\small An example of the rate region obtained by SND at BS 1.\label{fig:3}}
\end{figure} 
By comparing \eqref{eq:62}-\eqref{eq:65} with the achievable regions of SD and S-SND, it is noted that there are four faces in Fig. \ref{fig:3} that are only achieved by SND and not by any other schemes. More precisely at BS 1, it is possible for $R_{\rm Sym}^1$ to achieve one of the rates, $I (  \hat{y}_{1i} ; \; x_1 [i] \; \big\vert \;  x_3 [i]   \; )$, $I (  \hat{y}_{1i} ; \; x_1 [i] \; \big\vert \; x_2 [i]  \; )$, $ \frac{1}{2} I (  \hat{y}_{1i} ; \; x_1 [i] , x_2 [i] \; )$ or $ \frac{1}{2} I (  \hat{y}_{1i} ; \; x_1 [i] , x_3 [i] \; )$. Note that the first rate $I (  \hat{y}_{1i} ; \; x_1 [i] \; \big\vert \;  x_3 [i]   \; )$ can be interpreted as the maximum rate of the $i^{\rm th}$ user of cell 1, while treating the $i^{\rm th}$ user in cell 2 as noise. The second rate $I (  \hat{y}_{1i} ; \; x_1 [i] \; \big\vert \; x_2 [i]  \; )$ can be interpreted similarly. Moreover, the rate $ \frac{1}{2} I (  \hat{y}_{1i} ; \; x_1 [i] , x_2 [i] \; )$ can be interpreted as the maximum symmetric rate achieved by joint decoding of the $i^{\rm th}$ users of cells 1 and 2, while treating the $i^{\rm th}$ user of cell 3 as noise. The fourth rate $ \frac{1}{2} I (  \hat{y}_{1i} ; \; x_1 [i] , x_3 [i] \; )$ can be interpreted similarly. Therefore, neither SD/S-SND nor TIN can provide these rates, in which case it is conceivable that SND could strictly outperform \textit{all the other schemes}. More discussion will be provided in the numerical results section.
\section{Numerical Results}
To illustrate the performance of the two-cell system under both cases of (i) and (ii), we consider two different scenarios. In scenario (a), we assume that the cell radius and the distance of BSs are fixed, while the number of antennas $M$ varies. In scenario (b), we assume that $M$ and the distance of BSs are fixed, while the cell radius varies. In both scenarios for the geometry setting shown in Fig. \ref{fig:two:circle}, we compare the performance of various schemes TIN/SD/SND based on their achieved maximum symmetric rate using the bounds in \eqref{eq:17:6}. In particular, for scenario (a), we numerically quantify a threshold on $M$ at which the transition from case (i) to case (ii) is observed. Analogously, for scenario (b), we numerically quantify a threshold on the distance $x$ at which the transition from case (i) to case (ii) is observed. It should be pointed out that the results of this section are numerical examples presented only for the sake of illustration that validate the analytical findings of the previous sections. Hence, the identified thresholds depend on specific choices of system parameters. Moreover, for the large-scale fading coefficients $\beta_{jil}$ we use a distance-based path loss model similar to \cite{hoydis2013massive} and neglect shadowing, i.e, $\beta_{jil}=( d_0/d_{jil})^{\alpha}$, where $d_{jil}$ is the distance of the $i^{\rm th}$ user in cell $l$ from the BS $j$, $\alpha$ is the path loss exponent and $d_0$ is a normalization constant. For both scenarios (a) and (b), we take $K=4$, $\alpha = 2$, $\rho_{\rm u} = 30$, $\rho_{\rm p} = 120$, and $d_0=100$ m.
\begin{figure*}[t!] 
	\captionsetup[subfloat]{captionskip=0mm}
	\centering    
	\subfloat[Scenario (a)]{\includegraphics[width=8.5cm, height=6.5 cm]{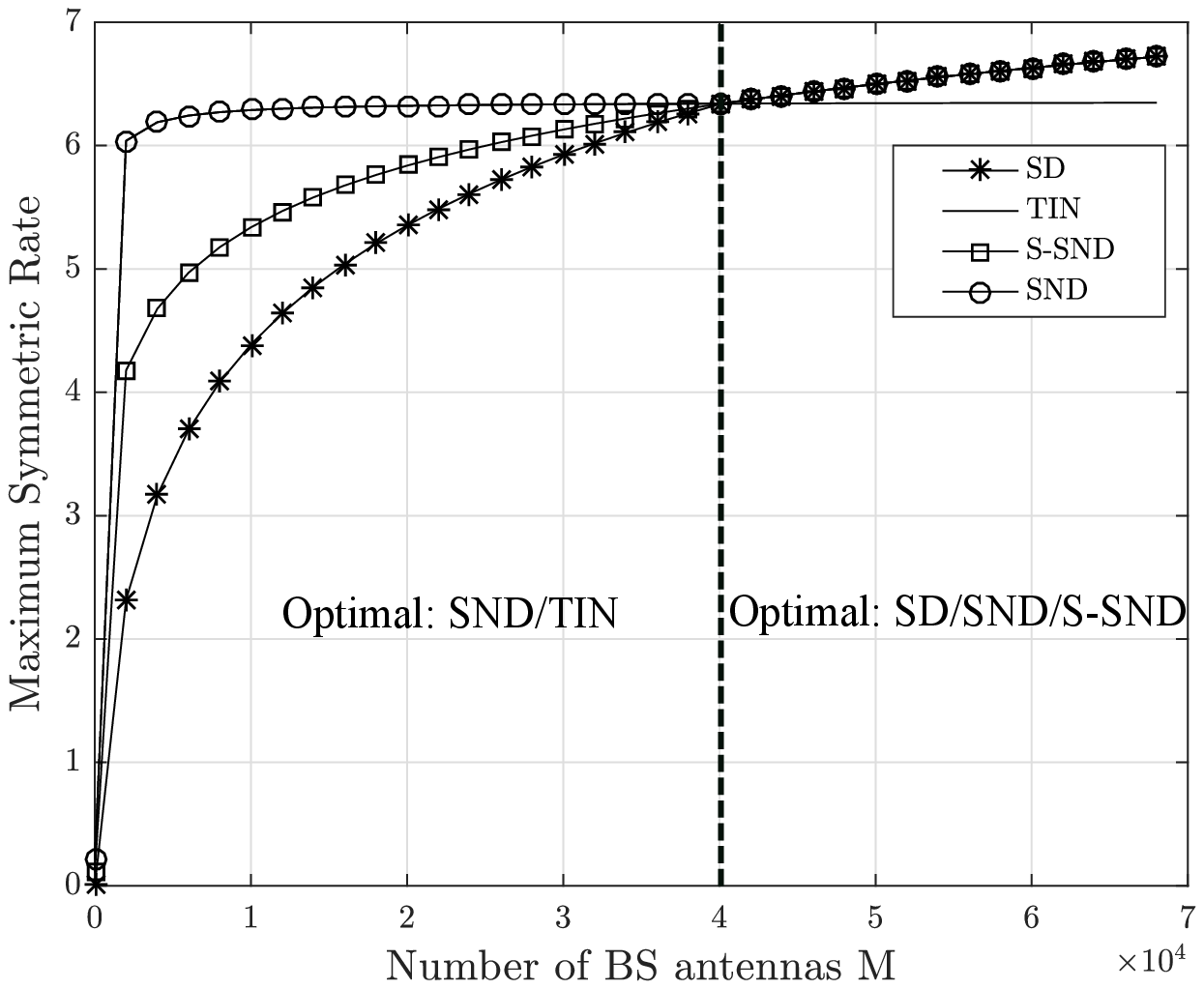}\label{fig:two-cell:rate}} %
	\subfloat[Scenario (b)]{\includegraphics[width=8.5cm, height=6.5 cm]{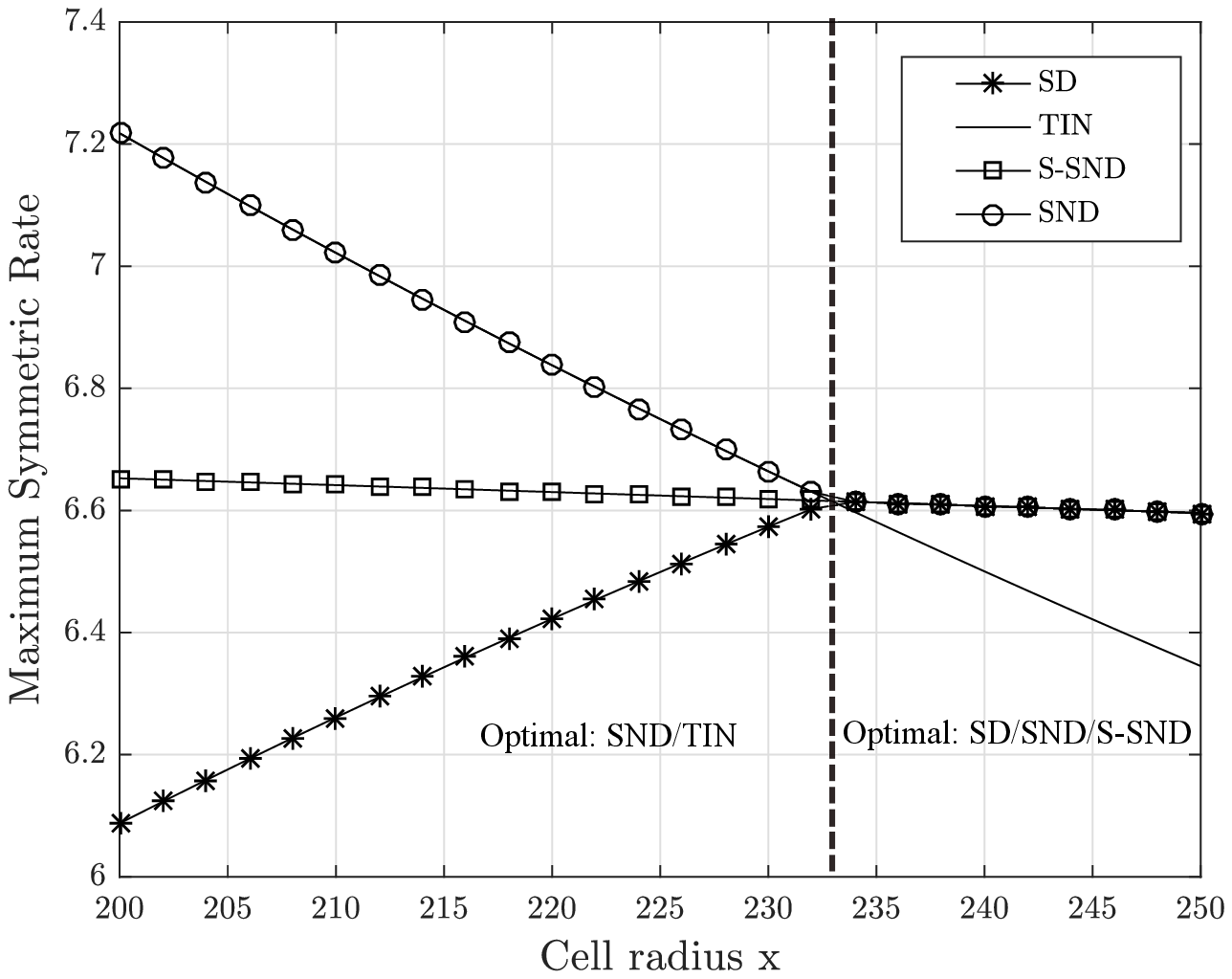}\label{fig:two-cell-x:rate}}\hspace{5cm}
\caption{\small Achieved maximum symmetric rate across two cells, $R_{\textrm{Sym}}$, over regions of TIN/SD/SND/S-SND: (a) for fixed values of $d = 2x$ and $x=400$ m, as $M$ is increased; (b) for fixed values of $d = 500$ m and $M =  5 \! \times \! 10^4$, as $x$ is increased.}
\end{figure*}
Achieved maximum symmetric rate of various schemes for scenario (a) are shown in Fig. \ref{fig:two-cell:rate}, where $x= 400$ m and $d= 2 x$. It can be observed that for $M < 4 \times 10^4$, condition of case (i) in \eqref{eq:48} is active; thus, SND and TIN have the same performance and strictly outperform SD/S-SND, i.e.,  $R_{\textrm{Sym}}^{\textrm{SD}} <  R_{\textrm{Sym}}^{\textrm{S-SND}} < R_{\textrm{Sym}}^{\textrm{SND}} = R_{\textrm{Sym}}^{\textrm{TIN}} $. In other words, for $M < 4 \times 10^4$, to achieve the optimum performance each BS should only decode the signal of its own user while treating the signal of PC interference as noise. On the other hand, when $M > 4 \times  10^4$, the condition of case (ii) in \eqref{eq:44} is active; thus, interference decoding schemes are all optimal, i.e., $R_{\textrm{Sym}}^{\textrm{TIN}} < R_{\textrm{Sym}}^{\textrm{SD}} =  R_{\textrm{Sym}}^{\textrm{S-SND}} = R_{\textrm{Sym}}^{\textrm{SND}} $. Consequently, for significantly large values of $M$, to achieve the optimum performance each BS should jointly decode both the signal of its own user as well as that of the PC interference. This observation also matches with the consequence of the high SINR regime for truly large $M$ in \eqref{eq:29:2:1}. Also, notice that there does not exist any range of $M$ for which SND is strictly optimal. These observations are all in agreement with the analysis performed in subsection \ref{subsection:two:cell}. 
 
Next consider scenario (b) where the BSs are at a distance of $d=500$ m, $M = 5 \times 10^4$, and the cell radius $x$ varies from $200$ m to $250$ m.  The maximum symmetric rate for various schemes in this scenario, shown in Fig. \ref{fig:two-cell-x:rate}, illustrates that when approximately $x < 233$ m, the condition of case (i) in \eqref{eq:48} is active, and thus using TIN is optimal. One implication of this observation is that for a fixed $M$, there exists a threshold on cell radius such that if $x$ is smaller than this threshold (i.e., $ 233$ m in this example), interfering users located in the other cell are far away from the BS of the current cell, and hence treating interfering users as noise is optimal. On the other hand, if $x$ is above the threshold, the condition of case (ii) in \eqref{eq:44} is active: the interfering users are now close to the current BS, and thus interference decoding schemes achieve the optimal performance.   
\begin{figure}[t!]
\scriptsize
\centering 
\def\svgwidth{160pt} 
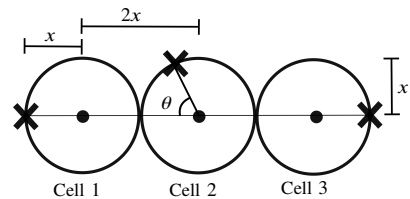 
\caption{\small Illustration of a three-cell system with circular cells, where all users of the left and the right cells are located on the cell edge at the farthest distance from the BSs located at the center of the cells, whereas the position of users on the edge of the middle cell is changing over $0^{\circ} \leq \theta \leq 180^{\circ}$. The position of users is denoted by '$\times$'.} 
\label{fig:three:circle}
\end{figure}
\begin{figure}[t!]
\centering
\epsfig{width=9cm, figure=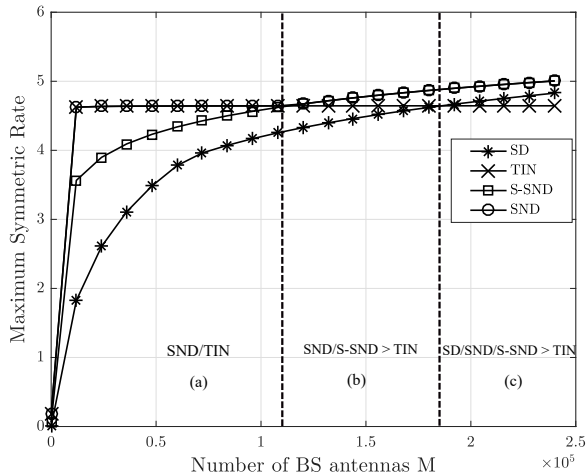}
\caption{\small Achieved maximum symmetric rate across three cells, $R_{\textrm{Sym}}$, over all four regions TIN/SD/SND/S-SND. The x-axis is divided into three intervals: (a) SND and TIN have identical performance and strictly outperform SD/S-SND, (b) SND/S-SND strictly outperform TIN/SD, (c) SD/SND/S-SND strictly outperform TIN.}
\label{fig:three-cell:rate}
\end{figure} 
\begin{figure*}[t] 
	\captionsetup[subfloat]{captionskip=0mm}
	\centering    
	\subfloat[$M=10^3$]{\includegraphics[width=7cm, height=6 cm]{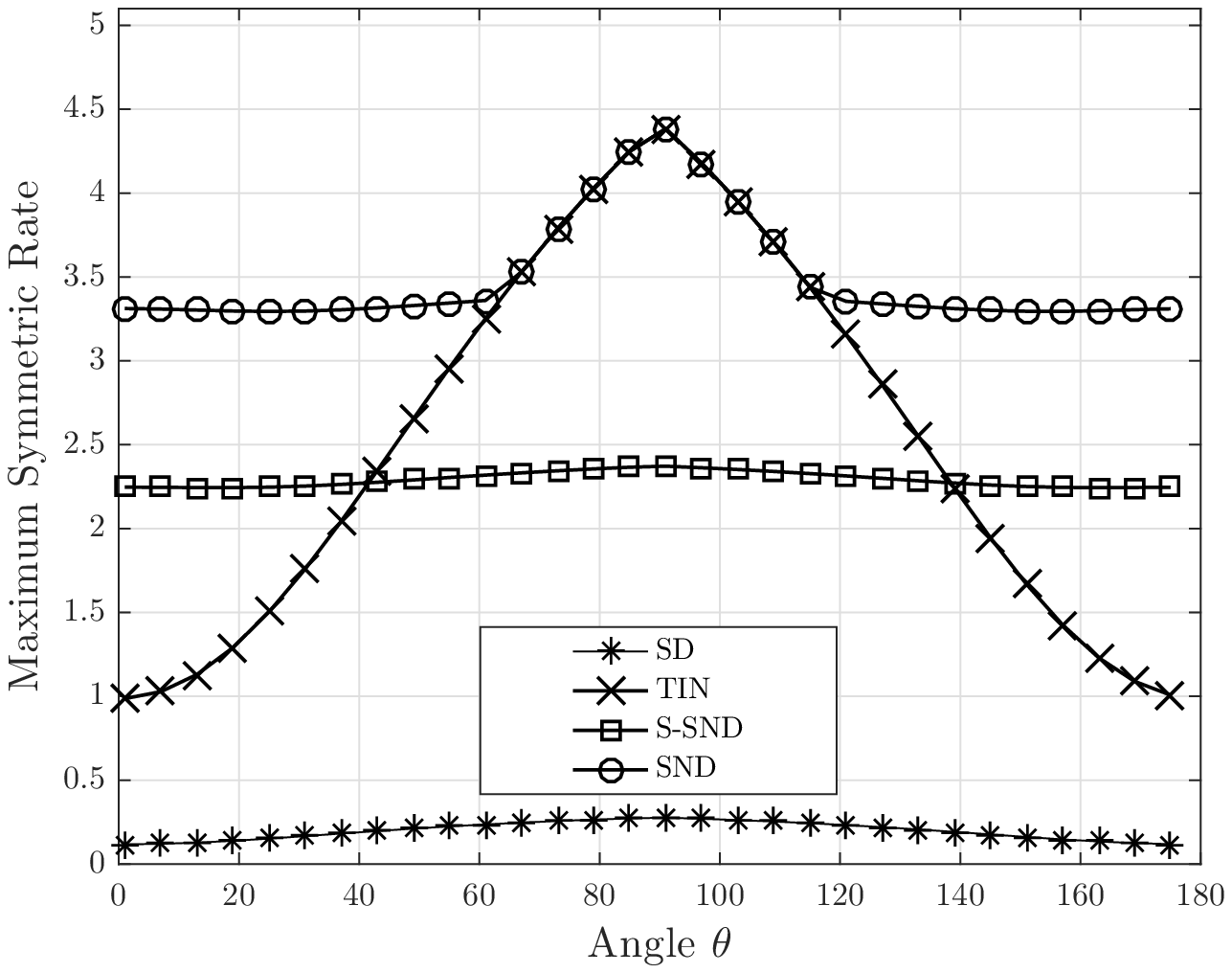}\label{fig:theta:1}} %
	\subfloat[$M=10^4$]{\includegraphics[width=7cm, height=6 cm]{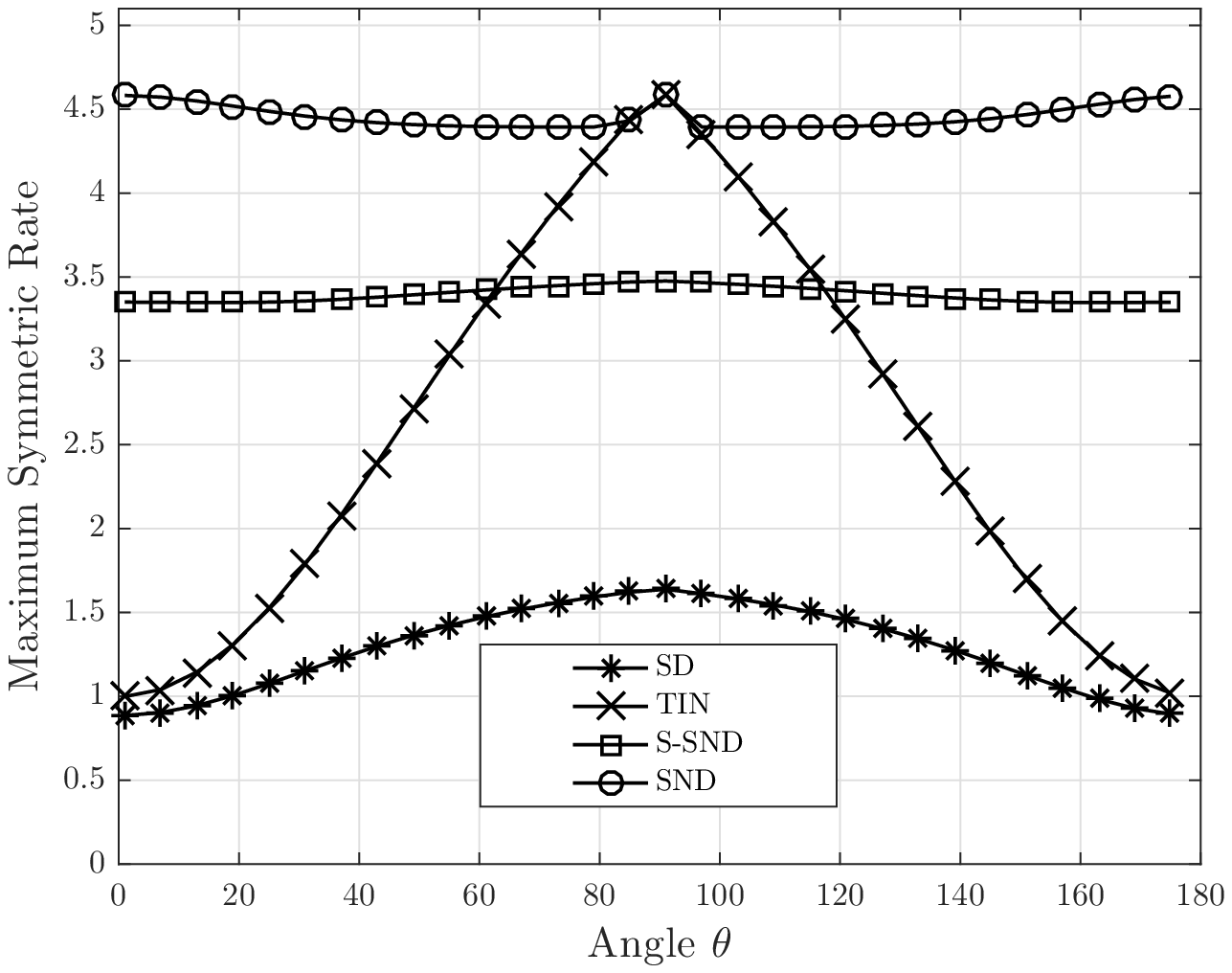}\label{fig:theta:2}}\hspace{5cm}
	\subfloat[$M=5 \times 10^4$]{\includegraphics[width=7cm,height=6 cm]{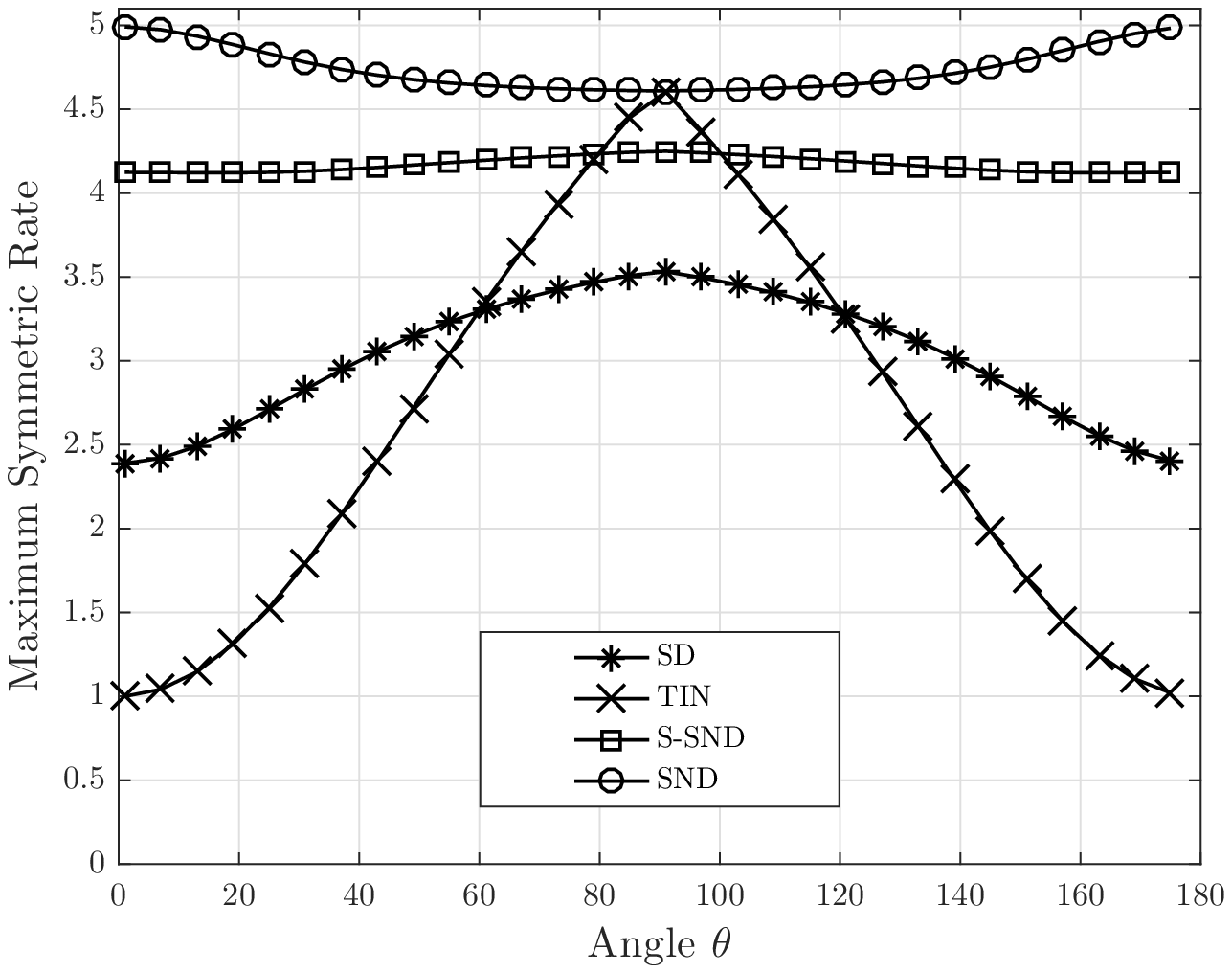}\label{fig:theta:3}} 
	\subfloat[$M= 10^5$]{\includegraphics[width=7cm,height=6 cm]{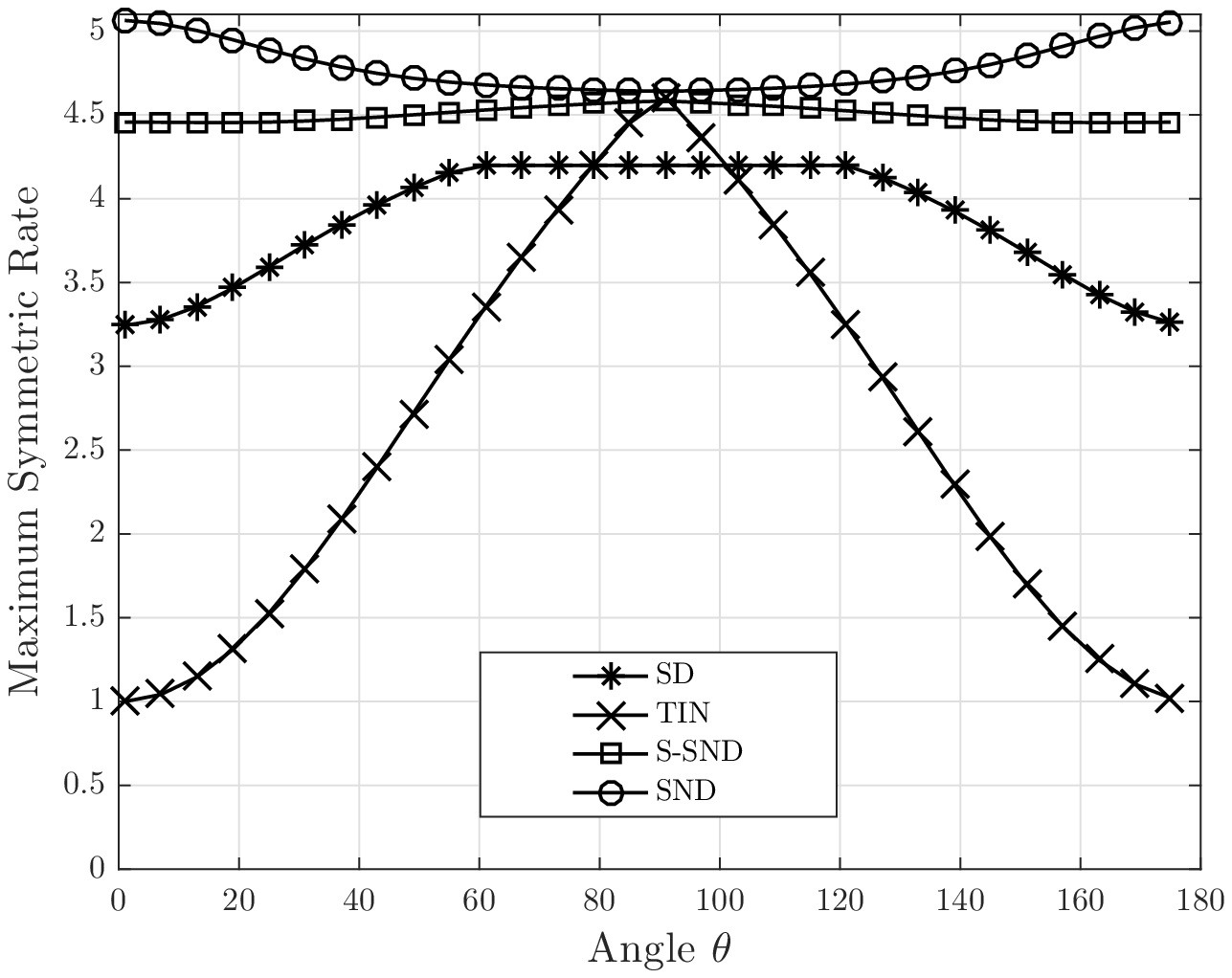}\label{fig:theta:4}}%
	\caption{\small Achieved maximum symmetric rates across three cells, $R_{\textrm{Sym}}$, over all four regions TIN/SD/SND/S-SND for $0^{\circ} \leq \theta \leq 180^{\circ}$, (a) $M=10^3$, (b) $M=10^4$, (c) $M=5 \times 10^4$, (d) $M=10^5$. \label{fig:theta}}
\end{figure*}
We now extend the cell configuration model considered in Fig. \ref{fig:two:circle}, to the case of three cells. Particularly, we consider two scenarios with three circular cells based on Fig. \ref{fig:three:circle} with $x = 400$ m, where the position of users in the left and the right cells is fixed on the cell edge at the maximum distance from the BSs located at the center, whereas those of the middle cell are determined based on the angle $\theta$. More specifically, in scenario (a), it is assumed that the users of the middle cell are located on the cell edge at $\theta=90^{\circ}$, while in scenario (b), it is assumed that the position of users on the edge of the middle cell are swept over $0^{\circ} \leq \theta \leq 180^{\circ}$. Note that while the setting of Fig. \ref{fig:three:circle} is impractical as all users will not be located at a single point at cell edge (i.e, at farthest distance from their BSs) in practice, nevertheless it does provide a conservative and somewhat pessimistic estimate of the user rates. In practice, users will typically be distributed more uniformly in the cell and some users will thus be located closer to their BSs. Hence, rate for users located closer to the BS will be higher than those located at cell edge, and thus the overall rate will be potentially higher.

The maximum symmetric rates of scenario (a) for various schemes and for different values of $M$ are illustrated in Fig. \ref{fig:three-cell:rate}, where the parameters of the setup are the same as before. It is observed from Fig. \ref{fig:three-cell:rate} that, even though for $M < 1.1 \times 10^5$ the performance of SND and TIN are identical and strictly better than SD/S-SND (and one can thus simply use TIN), for $M > 1.1 \times 10^5$ the PC interference is ``strong" in that decoding it, as opposed to treating it as noise, produces better rates, i.e., SND/S-SND strictly outperform \textit{all the other schemes} (and one should thus only use SND). In other words, for $1.1 \times 10^5 < M < 5 \times 10^5$ (the upper bound on $M$ is not shown in the figure), the optimum performance is achieved only by SND and not by any other scheme. However, for $M > 5 \times 10^5$ all interference decoding schemes achieve the same rate as expected, while outperforming TIN. Finally, observe that for (approximately) $M > 1.85 \times 10^5 $ SD outperforms TIN as well.

Finally, while keeping all other parameters the same as before, we consider scenario (b) where the position of users in the middle cell changes with $\theta$, and plot the resulting maximum symmetric rates for various schemes against $0^{\circ} \leq \theta \leq 180^{\circ}$  with different values of BS antennas as illustrated in Fig. \ref{fig:theta}: (a) $M=10^3$, (b) $M=10^4$, (c) $M=5 \times 10^4$, and (d) $M=10^5$. We focus on the rates for $0^{\circ} \leq \theta \leq 180^{\circ}$ as, due to symmetry, the rates for $180^{\circ} < \theta < 360^{\circ}$ are identical to those for $0^{\circ} \leq \theta \leq 180^{\circ}$.  Also, recall that the SND region, which contains TIN, S-SND and SD regions as special cases, as explained in subsection III-B, is always optimal. 

  Observe that in Fig. \ref{fig:theta:1} for $M=10^3$, the characteristics of rates can be classified into 3 regimes of $\theta$:  regime-1 where $\theta$ is smaller than $90^{\circ}$ and SND outperforms TIN (i.e., $\theta \leq 66^\circ$ in the setting of  Fig. \ref{fig:theta:1}), regime-2 where $\theta$ is close to $90^{\circ}$ and SND and TIN have the same performance (i.e., $66^\circ < \theta \leq 114^\circ$), and regime-3 where $\theta$  is larger than $90^{\circ}$ and SND outperforms TIN again (i.e., $\theta \geq 114^\circ$). In regime-1, we have $R_{\rm Sym}^{\rm SND} > R_{\rm Sym}^{\rm S-SND} > R_{\rm Sym}^{\rm TIN} > R_{\rm Sym}^{\rm SD}$, as explained below. Note that $\theta$ captures the distance between users and BSs in different cells, and for $\theta$ in regime-1, users in the middle cell are much closer to BS~1 and farther away from BS~3. Therefore, users in the middle cell creates ``strong'' PC interference at BS~1 and ``weak'' PC interference at BS~3; hence, SND/S-SND outperforms all other schemes as it allows for users of the middle cell to be decoded at BS~1 and to be treated as noise at BS~3. In comparison, TIN provides poor performance as treating ``strong'' users from the middle cell as noise at BS~1 drastically reduces the rates, whereas SD performs poorly as it requires decoding of ``weak'' PC interference from users in the middle cell at BS~3. In the complementary setting of regime-3, the same principles apply with the roles of BS~1 exchanged with BS~3.
    
   In contrast, for $\theta $ in regime-2, we have $R_{\rm Sym}^{\rm SND} = R_{\rm Sym}^{\rm TIN} > R_{\rm Sym}^{\rm S-SND} > R_{\rm Sym}^{\rm SD}$. Here, users in the middle cell are somewhat far from both BS~1 and BS~2, and thus the resulting PC interference becomes ``weak" at both BS~1 and BS~2. Hence, performing TIN at both BSs is optimal and provides identical performance to that of SND while outperforming S-SND and SD.
   
   As $M$ is increased, the decoding rates continue to follow the same trend, but regime-2 shrinks to a small set of angles near $90^\circ$, while the two other regimes expand, as illustrated in Fig.~\ref{fig:theta:1} to Fig.~\ref{fig:theta:4}. As $M$ increases, users in the middle cell produce progressively stronger interference at BS~1 and BS~3, and thus using TIN at BS~1 and BS~3 to treat these users as noise results in poor rates except for a small set of $\theta$ near $90^\circ$. These trends are similar to those in Fig. \ref{fig:three-cell:rate}, where $\theta$ was fixed at $\theta = 90^\circ$.  These trends also confirm that when considering more than two cells, depending on the geometry and the parameters of the setting (e.g., $M$, cell radius, etc), there exist scenarios where SND performs strictly better than TIN/SD/S-SND.  
\section{Conclusion}
In this paper, to address the PC problem in the uplink of a multi-cell massive MIMO system it was proposed to decode the interference caused by PC rather than treating it as noise. In particular, when MRC is used at BS, it was shown that by coding over multiple coherence intervals and decoding the PC interference, the per-user rates tend to infinity as $M \rightarrow \infty$. Moreover, it was shown that when decoding the interference, using the same pilots across all cells (as opposed to using different pilots) is preferred as it results in decoding significantly fewer interference terms at each BS. A worst-case uncorrelated noise technique was also established for multiple access channels, from which achievable rates under two interference decoding schemes SD/SND were found for finite $M$. Comparing the performance of different schemes based on their maximum symmetric rate, structural results were found for the extreme regimes of high and low SINR. Specifically, for the high SINR regime when $M$ is truly large, it was found that all interference decoding schemes achieve the maximum sum-rate and thus strictly outperform TIN.

The special cases of two-cell and three-cell systems were also studied. In the case of a two-cell system with symmetric geometry, conditions were found under which either SND and TIN achieve the same performance and are optimal (one should thus treat interference as noise), or all interference decoding schemes achieve the same performance and are optimal (one should thus jointly decode the desired signal as well as the PC interference). Furthermore, the analytical findings were numerically validated by quantifying a threshold on $M$ (or on the cell radius), where TIN/SND was shown to be optimal below this threshold. On the other hand, beyond this threshold it was observed that only the interference decoding schemes achieve optimum performance. 

Also in the case of a three-cell system, it was numerically shown that there exists a range of $M$ for which the optimum performance is achieved only by SND and not by any other scheme. Hence, it was concluded that for large enough $M$ when there are more than two cells (which is indeed true in practice), SND can strictly outperform \textit{all the other schemes}.  

One possible future extension is to consider the downlink counterpart of this problem using well-known linear precoding techniques such as maximum ratio transmission (MRT), zero forcing (ZF), etc., where each BS simultaneously serves $K$ users inside its cell. Specifically, after performing an arbitrary precoding technique at all BSs, $K$ non-interfering $L$-user ICs will be obtained, whereby with simultaneous unique/non-unique decoding (SD/SND) of the intended signal along with the PC interference at each user, one can find achievable rates similar to \eqref{eq:17:6} that scale as $\mathcal{O} (\log M)$. Another possible future extension is to consider a correlated Rayleigh fading channel that will lead to a non-diagonal channel covariance matrix. This change of channel model will change the MMSE estimate of the channel vector in \eqref{eq:6}-\eqref{eq:6:1} and consequently the distribution of the channel estimate and the estimation error, and thus the power of different terms in $P_1$-$P_4$ in \eqref{eq:17:1}-\eqref{eq:17:4}. One should note that, even though the updated expressions of \eqref{eq:17:1}-\eqref{eq:17:4} result in a new rate lower bound, it would still grow as $\mathcal{O} \log (M)$ (similar to \eqref{eq:17:6}), and thereby the final conclusions will remain the same.

 



%

\appendices
\section{}

Without loss of generality assume that $\Omega= \left\lbrace 1, 2, ..., l \right\rbrace$ and thus $\Omega^c = \left\lbrace l+1, ..., L \right\rbrace$. We start by expanding the r.h.s of \eqref{eq:w} as follows
\begin{align}\label{eq:w1}
I \left( \pmb{x}_{\Omega}^G ; \; y \big\vert \pmb{x}_{\Omega^c}^G  \right) &= h \left( \pmb{x}_{\Omega}^G \big\vert \pmb{x}_{\Omega^c}^G \right) - h \left( \pmb{x}_{\Omega}^G \big\vert y, \pmb{x}_{\Omega^c}^G \right) \\
&\stackrel{(a)}{=} h \left( \pmb{x}_{\Omega}^G \right) - h \left( \pmb{x}_{\Omega}^G \big\vert y, \pmb{x}_{\Omega^c}^G \right) \\
&\stackrel{(b)}{=} \log \left( (\pi e)^l \Pi_{i=1}^l P_i \right) - h \left( \pmb{x}_{\Omega}^G \big\vert y, \pmb{x}_{\Omega^c}^G \right), \label{eq:w11}
\end{align}
where $(a)$ is because the entries of $\pmb{x}_{\Omega}^G$ and $\pmb{x}_{\Omega^c}^G$ are independent, and $(b)$ follows from the entropy of a complex Gaussian vector with independent entries. Also, using the chain rule one can write  
\begin{align} \label{eq:w2}
&h \left( \pmb{x}_{\Omega}^G \big\vert y, \pmb{x}_{\Omega^c}^G \right) \\
\nonumber &= \sum_{i \in \Omega} h \left( x_i^G \Big\vert x_1^G, ..., x_{i-1}^G, y, \pmb{x}_{\Omega^c}^G \right) \\
\nonumber &= \sum_{i \in \Omega} h \left( x_i^G - \alpha_i \left( y - \sum_{j=1}^{i-1} x_j^G - \sum_{k \in \Omega^c} x_k^G \right) \Big\vert x_1^G, ..., x_{i-1}^G, y, \pmb{x}_{\Omega^c}^G \right),  
\end{align}
where $\alpha_i$ is any constant. Defining $\tilde{y}_i = y - \sum_{j=1}^{i-1} x_j^G - \sum_{k \in \Omega^c} x_k^G $, we obtain
\begin{align}\label{eq:w3}
\sum_{i \in \Omega} &\hspace{1mm}h \left( x_i^G - \alpha_i \tilde{y}_i \big\vert x_1^G, ..., x_{i-1}^G, y, \pmb{x}_{\Omega^c}^G \right) \\ 
\nonumber &\stackrel{(c)}{\leq} \sum_{i \in \Omega} h \left( x_i^G - \alpha_i \tilde{y}_i  \right) \\
 &\stackrel{(d)}{\leq} \sum_{i \in \Omega} \log \left( (\pi e) \text{var} \left[ x_i^G - \alpha_i \tilde{y}_i \right] \right),
\end{align}
where $(c)$ is due to the fact that conditioning reduces the entropy and $(d)$ follows as Gaussian distributions maximize entropy. To obtain the tightest upper bound, one should minimize $\text{var} [ x_i^G - \alpha_i \tilde{y}_i ]$, i.e., $\alpha_i \tilde{y}_i$ must be the LMMSE estimate of $x_i^G$. More precisely, one can choose $\alpha_i = \mathbb{E} [ \tilde{y}_i^{\ast} \tilde{y}_i ]^{-1} \mathbb{E} [ x_i^{G} \tilde{y}_i^{\ast} ] = P_i / (\sum_{j=i}^l P_j + \sigma_z^2)$, where the second equality follows since $z$ is uncorrelated from the users' signals. Thus
\begin{equation}\label{eq:w6}
\text{var} \left[ x_i^G - \alpha_i \tilde{y}_i \right] = \dfrac{ P_i \left( \sum_{j=i+1}^l P_j + \sigma_z^2 \right) }{ \sum_{j=i}^l P_j + \sigma_z^2 },
\end{equation}
and therefore we obtain
\begin{align}\label{eq:w7}
h \left( \pmb{x}_{\Omega}^G \big\vert y, \pmb{x}_{\Omega^c}^G \right) &\leq \sum_{i \in \Omega}  \log  \left( (\pi e) \dfrac{P_i \left( \sum_{j=i+1}^l P_j + \sigma_z^2 \right)}{\sum_{j=i}^l P_j + \sigma_z^2} \right) \\
&= \log \left( (\pi e)^l \dfrac{\left( \Pi_{i \in \Omega} P_i \right) \sigma_z^2}{ \sum_{i \in \Omega} P_i + \sigma_z^2} \right) .
\end{align}
Hence, from \eqref{eq:w11} the following lower bound is obtained
\begin{align}
\nonumber I &\left( \pmb{x}_{\Omega}^G ; \; y \big\vert \pmb{x}_{\Omega^c}^G  \right) \\
&\geq \log \left( (\pi e)^l \Pi_{i \in \Omega} P_i \right) - \log \left( (\pi e)^l \dfrac{\left( \Pi_{i \in \Omega} P_i \right) \sigma_z^2}{ \sum_{i \in \Omega} P_i + \sigma_z^2} \right) \label{eq:w8} \\
&= \log \left( 1 + \dfrac{\sum_{i \in \Omega} P_i}{\sigma_z^2} \right)  = I \left( \pmb{x}_{\Omega}^G ; \; y^G \Big\vert \pmb{x}_{\Omega^c}^G \right).
\end{align}

\section{}
We start by computing the power of the desired signals, $P_1$. Note that
\begin{align}\label{eq:app:1}
P_1 &= \sum_{l \in \Omega} \rho_{\rm u} \left\vert \mathbb{E} \left[ \pmb{\hat{g}}_{jij}^{\dag} \pmb{g}_{jil} \right] \right\vert^2 = \sum_{l \in \Omega} \rho_{\rm u} \left\vert \mathbb{E} \left[ \pmb{\hat{g}}_{jij}^{\dag} \left( \pmb{\hat{g}}_{jil} + \pmb{e}_{jil} \right) \right] \right\vert^2  \\
&\stackrel{(a)}{=} \sum_{l \in \Omega} \rho_{\rm u} \left\vert \mathbb{E} \left[ \pmb{\hat{g}}_{jij}^{\dag}  \pmb{\hat{g}}_{jil}  \right] \right\vert^2  \\
&=  \sum_{l \in \Omega} M^2 \rho_{\rm u} \left( \dfrac{\beta_{jil}}{\beta_{jij}} \right)^2 \dfrac{\rho_{\rm p}^2 \beta_{jij}^4}{\left( 1 + \rho_{\rm p} \sum_{l_1=1}^L \beta_{jil_1} \right)^2} \\
  &= M^2 \sum_{l \in \Omega} \rho_{\textrm{p}} \rho_{\textrm{u}}   \beta_{jil}^2 \alpha_{jij}^2,
\end{align}
where (a) follows from the fact that $\pmb{\hat{g}}_{jij}$ and $\pmb{\hat{e}}_{jij}$ are independent. Note that as explained earlier, all terms in the effective noise are uncorrelated; thus $ \textrm{var} [ z_{jij}^{\prime} ] = P_2 + P_3 + P_4.$

For the power of interference due to the channel estimation error, $P_2$, we write
\begin{align}\label{eq:app:3}
P_2 &= \sum_{l=1}^{L} \rho_{\textrm{u}} \mathbb{E} \left[ \left\vert  \pmb{\hat{g}}_{jij}^{\dag} \pmb{g}_{jil} - \mathbb{E} \left[ \pmb{\hat{g}}_{jij}^{\dag} \pmb{g}_{jil} \right] \right\vert^2  \right] \\
&= \sum_{l=1}^{L} \rho_{\textrm{u}} \mathbb{E} \left[ \left\vert  \pmb{\hat{g}}_{jij}^{\dag} \pmb{\hat{g}}_{jil} - \mathbb{E} \left[ \pmb{\hat{g}}_{jij}^{\dag} \pmb{\hat{g}}_{jil} \right] \right\vert^2  \right] \hspace{-1mm}+ \hspace{-1mm}\sum_{l=1}^L \rho_{\rm u}  \mathbb{E} \left[ \left\vert \pmb{\hat{g}}_{jij}^{\dag}  \pmb{e}_{jil}  \right\vert^2 \right] . \label{eq:app:3:1}
\end{align} 
For the first term in \eqref{eq:app:3:1} we obtain
\begin{align}
\nonumber \sum_{l=1}^{L} &\rho_{\textrm{u}} \mathbb{E} \left[ \left\vert  \pmb{\hat{g}}_{jij}^{\dag} \pmb{\hat{g}}_{jil} - \mathbb{E} \left[ \pmb{\hat{g}}_{jij}^{\dag} \pmb{\hat{g}}_{jil} \right] \right\vert^2  \right] \\
 &= \sum_{l=1}^L \rho_{\rm u} \left( \dfrac{\beta_{jil}}{\beta_{jij}} \right)^2  \textrm{var}  \left[ \pmb{\hat{g}}_{jij}^{\dag} \pmb{\hat{g}}_{jij} \right] \\
 &=  \sum_{l=1}^L  M \rho_{\rm u} \left( \dfrac{\beta_{jil}}{\beta_{jij}} \right)^2   \dfrac{\rho_{\rm p}^2 \beta_{jij}^4}{\left( 1 + \rho_{\rm p} \sum_{l_1=1}^L \beta_{jil_1} \right)^2}  .\label{eq:app:4}
\end{align}
 Similarly, for the second term in \eqref{eq:app:3:1} we obtain
\begin{align}
\nonumber \sum_{l=1}^L &\rho_{\rm u}  \mathbb{E} \left[ \left\vert \pmb{\hat{g}}_{jij}^{\dag}  \pmb{e}_{jil}  \right\vert^2 \right] \\
 &= \sum_{l=1}^L \rho_{\rm u}  \mathbb{E} \left[ \textrm{tr} \left(  \pmb{\hat{g}}_{jij}^{\dag}  \pmb{e}_{jil} \pmb{e}_{jil}^{\dagger} \pmb{\hat{g}}_{jij} \right) \right] \\
&= M \rho_{\rm u}  \left( \dfrac{\rho_{\rm p} \beta_{jij}^2}{ 1 + \rho_{\rm p}  \sum_{l_1=1}^L \beta_{jil_1}} \right)  \sum_{l=1}^L  \left( \beta_{jil} -  \dfrac{\rho_{\rm p} \beta_{jil}^2}{1 + \rho_{\rm p} \hspace{-1mm}\sum_{l_1 =1}^L \beta_{jil_1}} \right).\label{eq:app:5}
\end{align}
Therefore, using \eqref{eq:app:4} and \eqref{eq:app:5}, one can verify that
\begin{align}\label{eq:app:6}
P_2 &=  \sum_{l=1}^L  M \rho_{\rm u} \left( \dfrac{\beta_{jil}}{\beta_{jij}} \right)^2  \left(  \dfrac{\rho_{\rm p}^2 \beta_{jij}^4}{\left( 1 + \rho_{\rm p} \sum_{l_1=1}^L \beta_{jil_1} \right)^2} \right)  \\
\nonumber &\hspace{4mm}+  M \rho_{\rm u} \left(  \dfrac{\rho_{\rm p} \beta_{jij}^2}{ 1 + \rho_{\rm p} \sum_{l_1=1}^L \beta_{jil_1}} \right) \hspace{-1mm}\sum_{l=1}^L  \left( \beta_{jil} -  \dfrac{\rho_{\rm p} \beta_{jil}^2}{1 + \rho_{\rm p} \hspace{-1mm}\sum_{l_1 =1}^L \beta_{jil_1}} \right)  \\
&=  M  \sqrt{\rho_{\textrm{p}}} \beta_{jij} \alpha_{jij} \sum_{l=1}^L  \rho_{\textrm{u}} \beta_{jil} .
\end{align}
For the power of the interference of other users, $P_3$, we write
\begin{align}\label{eq:app:7}
P_3 &= \sum_{l=1}^L \sum_{k=1,k \neq i}^K \rho_{\textrm{u}}  \mathbb{E}  \left[  \left\vert \pmb{\hat{g}}_{jij}^{\dag} \pmb{g}_{jkl}  \right\vert^2  \right] \\
&= \sum_{l=1}^L \sum_{k=1,k \neq i}^K \rho_{\textrm{u}}  \mathbb{E}  \left[  \left\vert \pmb{\hat{g}}_{jij}^{\dag} \pmb{\hat{g}}_{jkl}  \right\vert^2  \right]  + \sum_{l=1}^L \sum_{k=1,k \neq i}^K \hspace{-2mm}\rho_{\textrm{u}} \mathbb{E}  \left[  \left\vert \pmb{\hat{g}}_{jij}^{\dag} \pmb{e}_{jkl}  \right\vert^2  \right]  \\
\nonumber &= \sum_{l=1}^L \sum_{k=1,k \neq i}^K M \rho_{\textrm{u}}  \left( \dfrac{\rho_{\rm p} \beta_{jij}^2}{1+ \rho_{\rm p} \sum_{l_1=1}^L  \beta_{jil_1} } \right) \left(  \dfrac{\rho_{\rm p} \beta_{jkl}^2 }{1 + \rho_{\rm p} \sum_{l_2=1}^L \beta_{jkl_2} } \right)  \\
\nonumber &\hspace{5mm}+  \sum_{l=1}^L \sum_{k=1,k \neq i}^K M \rho_{\textrm{u}}  \left( \dfrac{\rho_{\rm p} \beta_{jij}^2 }{1 + \rho_{\rm p} \sum_{l_1=1}^L \beta_{jil_1}} \right)  \\
 &\hspace{23mm}\times \left( \beta_{jkl} -  \dfrac{\rho_{\rm p} \beta_{jkl}^2}{1 + \rho_{\rm p} \sum_{l_2 =1}^L \beta_{jkl_2}} \right) \\
&=  M  \sqrt{\rho_{\textrm{p}}} \beta_{jij} \alpha_{jij} \sum_{l=1}^L \sum_{k=1, k \neq i}^K  \rho_{\textrm{u}}  \beta_{jkl} .
\end{align}
Finally, for the power of the noise, $P_4$, we obtain
\begin{align}\label{eq:app:8}
P_4 &= \mathbb{E}  \left[ \left\vert \pmb{\hat{g}}_{jij}^{\dag} \pmb{n}_j \right\vert^2 \right] = \textrm{tr} \left( \mathbb{E} \left[ \pmb{\hat{g}}_{jij} \pmb{\hat{g}}_{jij}^{\dag}  \right] \mathbb{E} \left[ \pmb{n}_j \pmb{n}_j^{\dagger} \right] \right) \\
&=  M \dfrac{\rho_{\rm p} \beta_{jij}^2}{1 + \rho_{\rm p} \sum_{l_1=1}^L \beta_{jil_1}} =  M \sqrt{\rho_{\textrm{p}}} \beta_{jij} \alpha_{jij} .
\end{align}

\section{}
We now propose efficient methods to compute the maximum symmetric rate of SD ($[\mathcal{P}1]$) or S-SND ($[\mathcal{P}2]$), where the assumption of high and low SINR are no longer made.

\textbf{\textit{SD}:} One may first compare all $2^L -1$ (total number of non-empty subsets of $S = \left\lbrace 1, 2, ..., L \right\rbrace$) different rates with a computational complexity that is exponential in the number of cells. However, due to the structure of $[\mathcal{P}1]$ we can find the minimum by comparing only $L$ rates at each BS. Thus network-wide, this task can be done in $\mathcal{O} (L^2)$ time as there are in total $L$ cells in the network. For notational brevity define $s_q = \sum_{l=1}^q \beta_{ji(l)}^2$ as the sum of the first $q$ entries of $\pmb{\pi}_{ji}$ and also
\begin{equation}\label{eq:34}
\mu_{ji} =  \dfrac{ M  \rho_{\textrm{p}} \rho_{\textrm{u}} }{  \left( \sum_{l=1}^L \sum_{k=1}^K \rho_{\textrm{u}} \beta_{jkl} +1 \right) \left( 1 + \rho_{\rm p} \sum_{l=1}^L \beta_{jil} \right)}.
\end{equation}
One can see that due to the structure of the entries of $\pmb{\pi}_{ji}$, if $\vert \Omega_j^{{\ast}} \vert = q$ for some integer $1 \leq q \leq L$, the minimum of the objective function in $[\mathcal{P}1]$ is $1/q \; \log \left( 1 + \mu_{ji} s_q \right)$. Hence, define
\begin{equation}\label{eq:35}
v_q =  \log \left( 1 + \mu_{ji} s_q \right) \; / \; q, \quad \forall  q  .
\end{equation}
Then, $[\mathcal{P}1]$ reduces to $\min_q v_q$, which can be calculated by comparing $L$ values.

\textbf{\textit{S-SND}:} Similarly when S-SND is used at BS $j$, define
\begin{equation}
c_q =
\begin{cases} 
\log \left( 1 + \mu_{ji} \beta_{jij}^2 \right), & \quad  q =1  \label{eq:36} \\
\dfrac{1}{q} \log \left( 1 + \mu_{ji} \left( \beta_{jij}^2 + \sum_{l=1, l \neq j}^q \beta_{ji(l)}^2 \right) \right), & \quad \textrm{otherwise} .
\end{cases}
\end{equation}
Therefore, $[\mathcal{P}2]$ reduces to $\min_q c_q$, which can also be calculated by comparing $L$ values.

\section{}
Using \eqref{eq:37}-\eqref{eq:43}, it can be verified that in case (i) we have $ R_{\rm Sym}^{\textrm{SD},1} = I (  \hat{y}_{1i} ; \; x_2 [i] \; \vert \; x_1 [i] \; )$, $R_{\textrm{Sym}}^{\textrm{S-SND},1} = \frac{1}{2} I (  \hat{y}_{1i} ; \; x_1 [i], \; x_2 [i] \; )$, and $R_{\textrm{Sym}}^{\textrm{SND},1} = R_{\textrm{Sym}}^{\textrm{TIN},1} = I (  \hat{y}_{1i} ; \; x_1 [i] \; )$.

Furthermore, using \eqref{eq:mutual:1}, one can rewrite \eqref{eq:48} as $I (  \hat{y}_{1i} ; \; x_2 [i] \; \vert \; x_1 [i] \; ) < \frac{1}{2} I (  \hat{y}_{1i} ; \; x_1 [i], \; x_2 [i] \; )$, and conclude that $R_{\textrm{Sym}}^{\textrm{SD},1} <  R_{\textrm{Sym}}^{\textrm{S-SND},1}$. Moreover, using \eqref{eq:mutual:1}, one can rewrite \eqref{eq:48} as $\frac{1}{2} I (  \hat{y}_{1i} ; \; x_1 [i], \; x_2 [i] \; ) < I (  \hat{y}_{1i} ; \; x_1 [i] \; )$, and conclude that $R_{\textrm{Sym}}^{\textrm{S-SND},1} < R_{\textrm{Sym}}^{\textrm{SND},1} = R_{\textrm{Sym}}^{\textrm{TIN},1}$. This is illustrated in Fig. \ref{fig:cases:1}. Similarly, using \eqref{eq:37}-\eqref{eq:43}, it can be verified that in case (ii) we have $R_{\textrm{Sym}}^{\textrm{TIN},1} = I (  \hat{y}_{1i} ; \; x_1 [i] \; )$, and $R_{\textrm{Sym}}^{\textrm{SD},1} = R_{\textrm{Sym}}^{\textrm{SND},1} = R_{\textrm{Sym}}^{\textrm{S-SND},1} = \frac{1}{2} I (  \hat{y}_{1i} ; \; x_1 [i], \; x_2 [i] \; )$. Also, using \eqref{eq:mutual:1}-\eqref{eq:mutual:2}, one can rewrite \eqref{eq:44} as \newline $ \max \lbrace I (  \hat{y}_{1i} ; \; x_1 [i]  ), \; I (  \hat{y}_{1i} ; \; x_2 [i]  ) \rbrace  \leq \frac{1}{2} I (  \hat{y}_{1i} ; \; x_1 [i], \; x_2 [i] \; ) $. Hence, when this condition holds it yields \eqref{eq:47:1}, which is also illustrated in Fig. \ref{fig:cases:2}.

\bibliography{IEEEabrv,RefTCOM}

\end{document}